\documentclass[12pt, a4paper]{article}
\usepackage{amsfonts}
\usepackage{amsmath}
\usepackage{latexsym}
\usepackage[dvips]{graphics}
\usepackage{pgf}
\usepackage{hyperref}
\usepackage{pdflscape}
\usepackage{multicol}
\usepackage{float}
\usepackage[german,english]{babel}

\usepackage{natbib}
\bibliographystyle{dcu}

\usepackage{authblk}

\usepackage[nolist]{acronym}

\setlength{\textwidth}{16cm}

\newtheorem{theorem}{Theorem}[section]

\newenvironment{proof}{{\tt PROOF:}}{$\Box$\\}


\newcounter{defcounter}

\numberwithin{defcounter}{section}

\newcounter{excounter}

\numberwithin{excounter}{section}

\begin{document}

\begin{acronym}
\acro{AIC}{Akaike Information Criterion}
\acro{APB}{Average Percentage Bias}
\acro{ASC}{Alternative Specific Constant}
\acro{BIC}{Bayesian Information Criterion}
\acro{bME}{bivariate Mendell-Elston}
\acro{CCL}{Composite Conditional Likelihood}
\acro{CDF}{Cumulative Distribution Function}
\acro{CEF}{Conditional Expectation Function}
\acro{CL}{Composite Likelihood}
\acro{CLAIC}{Composite Likelihood Akaike Information Criterion}
\acro{CLBIC}{Composite Likelihood Bayesian Information Criterion}
\acro{CML}{Composite Marginal Likelihood}
\acro{c.p.}{ceteris paribus}
\acro{DCM}{Discrete Choice Models}
\acro{DGP}{Data Generating Process}
\acro{DM}{Decision Maker}
\acro{GHK}{Geweke-Hajivassiliou-Keane}
\acro{IC}{Information Criterion}
\acro{iid}{independent, identically distributed}
\acro{KLD}{Kullback-Leibler Divergence}
\acro{LLN}{Law of Large Numbers}
\acro{MA}{model averaging}
\acro{MACML}{Maximum Approximate Composite Marginal Likelihood}
\acro{MAE}{Mean Absolute Error}
\acro{MC}{Monte Carlo}
\acro{MCMC}{Markov Chain Monte Carlo}
\acro{ME}{Mendell-Elston}
\acro{ML}{Maximum Likelihood} 
\acro{MNP}{Multinomial Probit}
\acro{MOP}{German Mobility Panel}
\acro{MSE}{Mean Squared Error}
\acro{MSL}{Maximum Simulated Likelihood}
\acro{MVNCDF}{multivariate normal cumulative distribution function}
\acro{OLS}{Ordinary Least Squares}
\acro{PQD}{Positive Quadrant Dependence}
\acro{RE}{Random Effect}
\acro{RMSE}{Root Mean Squared Error}
\acro{RUM}{Random Utility Model}
\acro{RV}{Random Variable}
\acro{SEM}{Structural Equation Models}
\acro{SJ}{Solow-Joe}
\acro{SLLN}{Strong Law of Large Numbers}
\acro{ULLN}{Uniform Law of Large Numbers}
\acro{WLLN}{Weak Law of Large Numbers}
\acro{WTP}{Willingness to Pay}
 \end{acronym}

\title{Model selection and model averaging in MACML-estimated \ac{MNP} models}

\author[a]{Manuel Batram\footnote{Corresponding author; E-mail adress: Manuel.Batram@uni-bielefeld.de}}
\author[a]{Dietmar Bauer}

\affil[a]{\footnotesize{Department of Economics, Bielefeld University, Postfach 10 01 31, D-33501 Bielefeld, Germany.}}
\date{}

\maketitle

\begin{abstract}
This paper provides a review of model selection and model averaging 
methods for multinomial probit models estimated using the \ac{MACML} 
approach. The proposed approaches are partitioned into test based 
methods (mostly derived from the likelihood ratio paradigm), methods 
based on information criteria and model averaging methods. 
\\
Many of the approaches first have been derived for models estimated using maximum 
likelihood and later adapted to the composite marginal likelihood 
framework. In this paper all approaches are applied to the \ac{MACML} approach for estimation. The investigation lists advantages and disadvantages of the various methods in terms of asymptotic properties as well as computational aspects. We find that likelihood-ratio-type tests and information criteria have a spotty performance when applied to \ac{MACML} models and instead propose the use of an empirical likelihood test. 
\\
Furthermore, we show that model averaging is easily adaptable to \ac{CML} estimation and has promising performance w.r.t to parameter recovery. Finally model averaging is applied to a real world example in order to demonstrate the feasibility of the method in real world sized problems.

\end{abstract}
\section{Introduction}
The two most commonly used families of discrete choice models are the  multinomial logit (MNL) and the multinomial probit (\ac{MNP}) model. Between these two the \ac{MNP} offers better modeling flexibility at the expense of higher computational  costs. To alleviate the computational burden of \ac{MNP} estimation, the \ac{MACML} estimation approach combines \ac{CML} and an analytic approximation to allow for simulation-free, and fast (but approximate) estimation of \ac{MNP} models (see \citep{bhat2011}). Bhat and coworkers have shown that estimation in the \ac{MACML} framework is very fast (see \cite{cherchi2016}) and can be  applied even for complicated \ac{MNP} models (see e.g. \cite{kamargianni2015}).

Even though it is sometimes possible to specify a discrete choice model solely through theoretical reasoning, more often than not the researcher has to rely on model selection in order to arrive at a useful model. Despite \cite{bhat2011} as well as \cite{bhat2014} offering some limited advice, the literature on model selection for MNP models using the MACML estimation methodology is not extensive. Moreover, model averaging which has been proposed in a number of different contexts (see e.g. \cite{gao2016}, \cite{hjort2003}, \cite{hansen2007}, \cite{wan2014}) has not been dealt with in the context of \ac{CML} estimation.

Therefore this paper aims to stimulate discussion on this important topic by surveying proposals for model selection methods for \ac{MNP} models estimated using \ac{MACML}.\footnote{Note that in order to safe space we will sometimes violate the distinction between estimation method and model by abbreviating '\ac{MNP} models estimated using \ac{MACML}' as \ac{MACML} models.} We study those methods under the premise of a given, finite collection of competing models denoted as $\mathbb{M}$, 
from which the researcher selects one using a data-driven procedure. 

In the literature one finds two approaches for model selection. The first approach is based on a penalized goodness-of-fit function which is associated with each of the competing models. This function incorporates the trade-off between model fit and model complexity. In the context of \ac{MACML} this procedure is based on information criteria (IC) derived from the respective pseudo-likelihoods. 
The decision rule is that the model with the smallest, estimated value is retained. In an abstract sense this
partitions the space of all observations into regions where a particular model is chosen.

The alternative approach is based on repeated hypothesis testing wherein the final model choice is taken by repeatedly performing hypothesis tests of a restricted model versus an unrestricted model. Stepwise regression techniques are examples of this procedure. Again a partitioning of the observation space results.

The estimated parameter after model selection based on either approach therefore might be presented as 
\begin{equation*}
\hat{\theta}_j = \sum_{m \in \mathbb{M}} \hat{w}_m (Z) \hat{\theta}_j^m,
\end{equation*}
where $Z = (Y_n,X_n), n=1,...,N,$ denotes the observations and the weight is defined as 
\begin{equation*}
\hat{w}_m(Z)  = \begin{cases}
1,\quad \text{model m is selected by the decision rule based on observations Z}\\
0,\quad \text{otherwise}.
\end{cases}
\end{equation*}
Note that the weights are random as they are also 'estimated' from the data because the components of the model selection procedure depend on the given sample. Furthermore the decision rules only allow for the selection of a single model, which is equivalent to $\sum_{m \in \mathbb{M}} \hat{w}_m(Z) = 1$.

As an alternative to model selection in this paper we introduce model averaging methods for models estimated using the \ac{MACML} paradigm. The idea of \ac{MA} 
is to use the information from all models and not only the 'winning' model. The first motivation for doing so is the acknowledgment that model selection based on just one sample is subject to randomness. From the theoretical point of view, \ac{MA} avoids problems related to post-model-selection inference (see e.g. \cite[206ff]{claeskens2008} or \cite{leeb2005}). \ac{MA} is also appealing to the practitioner because it has been shown that predictions are better in terms of mean squared error when they are based on model averaging rather than model selection (see e.g. \cite{gao2016}). 

The estimated parameter after model averaging might be presented as,
\begin{equation}
\hat{\theta}_j^{MA} = \sum_{m \in \mathbb{M}} \hat{w}_m(Z) \hat{\theta}_j^m,
\end{equation}
where the weights fulfill $\sum_{m \in \mathbb{M}} \hat{w}_m (Z) = 1$. 
There are several model averaging procedures which differ in the way the weights $\hat{w}_m(Z)$ are justified and computed.

In this paper we present the adaptations of the concepts well known in  the context of maximum likelihood estimation to the \ac{MACML} case. In all cases we discuss  advantages and disadvantages of the various approaches focusing on the analytical and numerical properties. This is in particular of importance as model selection and model averaging requires the computation of a number of models and hence exacerbates computational issues already occurring for the estimation of a single model estimation. 

The organization of the paper is as follows. First, we succinctly review the \ac{MACML} framework and argue that the \ac{MACML} pseudo-likelihood differs from the standard \ac{CML} framework. In the second section we review test- and information criterion-based model selection and discuss computational challenges. As a new contribution this section introduces a test which is framed in empirical likelihood theory. This is followed by a simulation exercise comparing the various approaches on two demonstration examples.

Third, we introduce and discuss \ac{MA} for \ac{MACML} followed by a simulation-based comparison of the discussed methods. 
In section~\ref{case} we present an empirical example based on real choice data from the German Mobility Panel in order to demonstrate the feasibility of model averaging. We conclude the article with a summary of our results and discuss practical implications of our findings.

\section{The MACML approach}
\label{chap:macml}
The first building block of the \ac{MACML} method is CML estimation (for a general survey of this method see \cite{varin2011}). In this paper only the special case of the so called pairwise likelihood is considered, although many results generalize also to more general CMLs. For pairwise CML  
the full-likelihood representing the joint probability of the observation of $T$ choices $C_{nt}$ ($n=1,...,N, t=1,...,T$) by one individual $n$ is replaced by the product of the probabilities of all pairs of choices, 
 
\begin{flalign}
lcml(\theta) = \sum_{n=1}^N \log cml_{n}(\theta) &=\sum_{n=1}^N \sum_{t=1}^{T-1} \sum_{t'=t+1}^{T}  \log cml_{ntt'}(\theta) \nonumber \\ 
&= \sum_{n=1}^N \sum_{t=1}^{T-1} \sum_{t'=t+1}^{T} \log \mathbb{P}(C_{nt}=y_{nt}, C_{nt'}=y_{nt'} |X_n, \theta). \label{eq:probit}
\end{flalign}

Note that the functions $cml_{ntt'}(\theta)$ are valid marginal likelihoods such that  key properties of the likelihood framework continue to hold. For example it directly follows that the composite score, which we define as $\partial lcml(\theta) = s_N(\theta)$, is of zero mean at the true parameter $\mathbb{E} s_N(\theta_0)=0$. \\

In analogy to \ac{ML} estimation the \ac{CML} estimator is defined as 
$$
\hat{\theta} = \arg \max_{\theta} lcml(\theta). 
$$
Under suitable regularity conditions this estimator is consistent 
for $N \to \infty$ (see \cite[190ff]{molenberghs2005}), whenever $T$ is fixed. Furthermore, it is possible to show that under suitable assumptions the estimator is asymptotically normally distributed, where the asymptotic variance is given by the inverse of the  Godambe/sandwich information matrix, $G(\theta) = H(\theta) J(\theta)^{-1} H(\theta)$
where the so called sensitivity matrix $H(\theta)$ is defined as $H(\theta) = \mathbb{E}[-\partial s_N(\theta)]/N$ and the variability matrix $J(\theta)$ is the variance of the composite score $J(\theta) =  var(s_N(\theta))/N$.
This is a standard result for misspecified (in the sense of using a pseudo likelihood instead of the true one) likelihood models which is due to the fact that the information identity (also called the second Bartlett identity) $J(\theta) \not= H(\theta)$ does not hold (see e.g. \cite{white1982}). In general the \ac{CML} estimator therefore is less efficient than the \ac{ML} estimator. 

When \ac{CML}  is utilized to estimate \ac{MNP} models, each $cml_{ntt'}(\theta)$ involves the evaluation of the \ac{MVNCDF}: As is well known the \ac{MNP} model can be derived using an underlying random utility function 
assigning the utility
$$
U_{ntk} = X_{ntk}\beta + \tilde X_{ntk}\alpha_n + e_{ntk}
$$
to the choice of alternative $k=1,...,K$  with characteristics $X_{ntk}, \tilde X_{ntk}$ in the $t$-th decision of individual $n$. Here $\beta$ denotes fixed parameters while $\alpha_n$ -- assumed to be normally distributed with zero mean and variance $\Omega$ -- allows for heterogeneity between deciders. 
Furthermore $e_{nt:} = [e_{ntk}]_{k=1,...,K}$ denotes the \ac{iid} random errors assumed to be normally distributed with zero mean and variance $\Sigma$. Consequently using the vector $U_{nt:} = [U_{ntk}]_{k=1,...,K}$ we obtain that $[U_{nt:}',U_{nt':}']'$ is normally distributed. 
\\
It follows that the probability that $y_{nt}$ and $y_{nt'}$ are the choices (that is have maximal random utility) in the $t$-th and $t'$-th choice respectively is given by 
\begin{equation} \label{eq:CDF}
\Phi_{2K-2}(b[X_{nt:},X_{nt':},\theta] \, ; \textbf{0}, \textbf{R}([X_{nt:},X_{nt':},\theta])),
\end{equation}
where $\theta$ collects all parameters contained in $\beta, \Omega, \Sigma$ for appropriate functions 
$b[X_{nt:},X_{nt':},\theta]$ and $\textbf{R}([X_{nt:},X_{nt':},\theta])$ defining the upper limits of integration and the correlation matrix respectively. Furthermore $\Phi_{2K-2}$ denotes a $2K-2$ dimensional \ac{MVNCDF} corresponding to mean zero  and variance-covariance matrix $\textbf{R}$.

Whenever $K$ is large, \ac{CML} estimation of \ac{MNP} models is still computationally demanding and the \ac{MACML} approach by \cite{bhat2011}, therefore, complements pairwise \ac{CML} estimation with an analytic approximation for \ac{MVNCDF}s. \cite{bhat2011} proposes the use of the \ac{SJ} approximation \citep{solow1990,joe1995}, whose general idea is to factorize the multivariate normal distribution into a product of conditional distributions, which are in turn approximated by linear projections. For a three dimensional case of calculating the \ac{MVNCDF} for 
$u_j \le b_j, j=1,2,3$ 
where $u_j$ are standard normally distributed 
with correlation $\textbf{R}_{ij}$ between $u_i$ and $u_j$ we obtain (using $I_j = \mathbb{I} (u_j \le b_j)$): 

\begin{flalign}
\Phi_3( {\textbf{b}};
\textbf{0}, \textbf{R}) &=
\mathbb{P}(u_{1} \le  {b}_{1}, u_{2} \le  {b}_{2}) \mathbb{P}(u_{3} \le  {b}_{3} |
u_{1} \le  {b}_{1},u_{2} \le  {b}_{2}) \label{eq:SJ}\\
& = \Phi_2( {b}_{1},  {b}_{2};0,\textbf{R}_{12}) \mathbb{E}[I_3 | 
I_{1} = 1, I_{2} =1]\nonumber\\ 
& \approx \Phi_2( {b}_{1},
 {b}_{2};0,\textbf{R}_{12}) \hat{p}^{3|12}( {\textbf{b}}, \textbf{R}) \nonumber \\ &=:
\hat{P}^{SJ:3|12}( {\textbf{b}}, \textbf{R}), \nonumber
\end{flalign}
where $\hat{p}^{3|12}( {\textbf{b}}, \textbf{R})$ denotes the linear projection 
$$
\hat{p}^{3|12}( {\textbf{b}}, \textbf{R}) := \Phi( {b}_{3}) + \textbf{q}({\textbf{b}}, \textbf{R})'
\textbf{Q}({\textbf{b}}, \textbf{R})^{-1}[1- \Phi( {b}_{1}),1-\Phi( {b}_{2})]'. 
$$
Here the entries of $\textbf{q}({\textbf{b}}, \textbf{R})$ and $\textbf{Q}({\textbf{b}}, \textbf{R})$ 
contain covariances of $I_i$ with $I_j$, which 
are functions requiring the evaluation of univariate and bivariate \ac{MVNCDF} functions. 
The positive definite matrix $\textbf{Q}({\textbf{b}}, \textbf{R})$ has smallest eigenvalues bounded away from zero for
bounded ${\textbf{b}}$ if the same holds for $\textbf{R}$. 
\\
Note that the ordering of the components for the approximation is arbitrary but influences the approximation quality. It is known that averaging over all possible permutations of coordinates improves the approximation accuracy at the price of higher computational load.
Therefore typically only a fixed number (with one being a popular value) of random permutations are averaged. 
\\
Note that the approximation does not guarantee that the approximated choice probabilities for all choices sums to one. 
Furthermore there are no guarantees that the approximation $\hat{p}^{3|12}( {\textbf{b}}, \textbf{R})$ lies in $[0,1]$. This is typically ensured using some ad hoc interventions. 
For further details regarding the \ac{SJ} approximation see \cite{joe1995}.

Even though the \ac{SJ} approximation is an important building block of \ac{MACML} and ensures the comparatively fast estimation of complex \ac{MNP} models, it is important to note that its application alters the pseudo-likelihood and that \ac{MACML} estimation is, therefore, not equivalent to \ac{CML} estimation. To be more precise it is possible to show that due to the \ac{SJ} approximation\footnote{This is also true for several other analytic approximations and no particular problem of the \ac{SJ} approximation (for further details see \cite{batram2016}). Furthermore, this inconsistency is also present in \ac{MSL}-estimates whenever the number of simulations is finite (see e.g. \cite{lee1992}).} the \ac{MACML} estimators for data generated from an \ac{MNP} is asymptotically biased and not consistent (see \cite{batram2016}). 
To the best of our knowledge there are no general results that quantify or state bounds for the deviations between the \ac{MACML} pseudo-likelihood and the \ac{CML} pseudo-likelihood even in large samples, 
although the deviation has been shown to be 
small in all real world case studies performed so far. 

A different way to look at the stated inconsistency is to note that, if the approximated choice probabilities are normalized to sum to one, they encode a parameterized mapping of the characteristics encoded in $X_{ntk}, \tilde X_{ntk}$  onto the choice probabilities. 
It appears that one can show\footnote{This result is work in progress.}  that the \ac{MACML} estimator using the normalized probabilities 
can be used to consistently  estimate the corresponding underlying parameters for this mapping. In this sense the \ac{MACML} estimator regains all the usual properties of consistency and asymptotic normality for a slightly altered model.  

For model selection and averaging we will need some more notation in the following sections: Assume that the parameter vector $\theta$ is of dimension $d$ and that we partition this vector into $\theta = (\tau,\gamma)$, where $\gamma$ is $p$-dimensional and $\tau$ has dimension $d-p$. The vector $\tau$ contains parameters that are known to be present in the model while we are unsure whether the parameters assigned to the vector $\gamma$ should be contained in the model. 
Our interest for model selection thus lies on $\gamma$ and we want to test $H_0$: $\gamma = \gamma_0$. The elements of $\gamma_0$ will most often equal zero. Restrictions to other values such as 1 are possible for example for variances.  
\\
With respect to the full parameter vector we denote the estimator where $\gamma$ is constrained according to the null hypothesis by $\hat{\theta}^{\gamma_0} = (\hat{\tau}^{\gamma_0}, \gamma_0)$ and the unconstrained estimator is $\hat{\theta} = (\hat{\tau},\hat{\gamma})$. Finally, let $G_{\gamma \gamma}(\theta)$ be the $p \times p$ trailing submatrix of the information matrix and $s_{N,\gamma}(\theta)$ the $p$-dimensional trailing subvector of the score which contains only entries related to $\gamma$. Furthermore, $G^{\gamma \gamma}(\theta)$ denotes the corresponding $p \times p$ trailing submatrix of  $G(\theta)^{-1}$.

\section{Model selection}

As discussed above, the \ac{MACML} estimation is not equivalent to \ac{CML} estimation. However, we start this section by surveying methods for the \ac{CML} framework and then -- in the following section -- assess their applicability to \ac{MACML} estimation by simulation.

\subsection{Likelihood-ratio-type tests}
\label{chap:tests}
In this section we focus only on likelihood-ratio-type tests. While it is easy to adapted score-type and Wald-type tests for the use within the \ac{CML} framework, the Wald-type tests suffer from the known shortcomings (lack of invariance to reparametrization and elliptical confidence region) and the tests based on the score are numerically unreliable (see \cite[193f]{molenberghs2005}). 

The standard likelihood-ratio tests are inapplicable because the ratio of two composite likelihoods does not adhere to an asymptotic $\chi_p^2$ distribution under $H_0$ but instead is a linear combination of independent but weighted $\chi_1^2$ distributions. If we are interested in testing $H_0: \gamma = \gamma_0$, then

\begin{equation}
\label{eq:CLR}
CLR(\gamma) = 2[lcml(\hat{\theta}) - lcml(\hat{\theta}^{\gamma_0})] \overset{a}{\sim} \sum_{j=1}^{p} \lambda_{j} (K_{j})^2,
\end{equation}
where $\lambda_j$ are eigenvalues of
$(H^{\gamma\gamma}(\hat{\theta}^{\gamma_0}))^{-1}G^{\gamma\gamma}(\hat{\theta}^{\gamma_0})$ (as discussed those are $p \times p$ matrices which are evaluated under the null hypothesis) and $K_j$ are independent standard normal random variables. Just like the standard likelihood ratio test its \ac{CML} counterpart rejects $H_0$ whenever $CLR$ is large.

There are several different adjustments to $CLR$, which aim to facilitate likelihood-ratio-type testing in the \ac{CML} framework. Due to space constraints we will only focus on the adjustments which have shown promising results in previous studies (for a more general overview and simulation results see \cite{pace2011} or \cite{cattelan2016}).\footnote{Note, however, that the simulation studies here usually focus on simple models like estimation of moments from the multivariate normal or Gaussian random fields. As those models are not a good surrogate for \ac{MACML}-estimated \ac{MNP} models  we provide our own simulations in section~\ref{chap:simu}.} Those adjustments either try to alter $CLR$ so that it is approximately $\chi^2$ distributed or match certain moments of the $\chi^2$ distribution.

The moment matching adjustments are motivated by the observation that if $p=1$ the eigenvalue is a simple ratio $\lambda_1 = J_{\gamma\gamma}(\hat{\theta}^{\gamma_0}) / H_{\gamma\gamma}(\hat{\theta}^{\gamma_0})$ and, therefore, $cCLR_1(\gamma) = CLR_1(\gamma)/\lambda_1$ is asymptotically $\chi_1^2$. For more than one parameter this adjustment has the form,\begin{equation}
cCLR_{1}(\gamma) = \frac{CLR(\gamma)}{\omega} \overset{a}{\sim} \chi_{p}^2,
\end{equation}
where $\omega = \sum_{k=1}^p\lambda_k/p$. This adjustment is equivalent to match the first moment of the $\chi^2$ distribution. A more sophisticated adjustment, which is designed to match the first and second moment, was proposed in \cite{varin2008},
\begin{equation}
cCLR_{2}(\gamma) = \frac{CLR(\gamma)}{\kappa} \overset{a}{\sim} \chi_{\nu}^2,
\end{equation}
where $\kappa = \sum_{j=1}^p\lambda_j^2 / \sum_{j=1}^p\lambda_j$ and $\nu = (\sum_{j=1}^p\lambda_j)^2 / \sum_{j=1}^p\lambda_j^2$. It is well known that this adjustment might be inaccurate because it only matches the first two moments. 
\\
Both the calculation of $cCLR_1(\gamma)$ and $cCLR_2(\gamma)$ are computationally only slightly more expensive than CML optimization: Only the eigenvalues $\lambda_i$ must be calculated based on estimates of $H(\hat \theta)$ and $G(\hat \theta)$ which demand one extra pass through the likelihood calculations (if the corresponding quantities are not stored in the last gradient descent step).   
\\
Another adjustment was first proposed by \cite{chandler2007} and later modified by \cite{pace2011} in order to be invariant to reparametrization. In contrast to the previous method this adjustment does not match moments but tries to ensure that the corresponding $CLR$ is asymptotically $\chi_p^2$ distributed,
\begin{equation*}
cCLR_3(\gamma) = \frac{s_{N,\gamma}(\hat{\theta}^{\gamma_0})' H^{\gamma\gamma}(\hat{\theta}^{\gamma_0})  [G^{\gamma\gamma}(\hat{\theta}^{\gamma_0})]^{-1} H^{\gamma\gamma}(\hat{\theta}^{\gamma_0})
s_{N,\gamma}(\hat{\theta}^{\gamma_0}) }{s_{N,\gamma}(\hat{\theta}^{\gamma_0})' H^{\gamma\gamma}(\hat{\theta}^{\gamma_0})^{-1} s_{N,\gamma}(\hat{\theta}^{\gamma_0})} CLR(\gamma)  \overset{a}{\sim} \chi_{p}^2.
\end{equation*}

Note that the numerator is the \ac{CML} version of a score test statistic but as for example explained in \cite{cattelan2016} $cCLR_3$ does not inherit its numerical instabilities. It is important to note that all those adjustments rely on estimators for the Godambe information matrix. Again the computation requires little extra work. 

As an alternative we introduce another test, which has the appeal that there is no need to compute the Godambe information matrix. This test is based on the empirical likelihood framework, which in the context of \ac{CML} estimation was first used by \cite{lunardon2013} to derive confidence regions for \ac{CML} estimates. Using general results from \cite{qin1994} it is straightforward to also introduce a likelihood-ratio-like test for \ac{CML}. The only property needed is that the composite score is an unbiased estimation equation, which it is under standard conditions as discussed in section~\ref{chap:macml}, and Corollary 5 from  \cite{qin1994}. Hence we do not provide any proofs.
\\
In general, this approach comes with two difficulties: First, regarding theoretical properties empirical likelihood methods are known for their slow convergence to their respective asymptotic distributions, which might impose problems for small sample sizes. Second there is a need for additional computations. The essential building block of empirical likelihood theory is,
\begin{equation*}
l_E(\theta) = 2 \sum_{n=1}^N \log[1 + \psi's_n(\theta) ],
\end{equation*}
with $s_n(\theta) = \partial \log cml_n(\theta)$, where $\psi = \psi(\theta)$ is a d-dimensional vector chosen to satisfy,
\begin{equation}
\label{eq:psi}
\frac{1}{N} \sum_{n=1}^{N}\frac{s_n(\theta)}{1+\psi's_n(\theta)} = 0.
\end{equation}
While the individual gradients are available from the \ac{CML} estimation we need to compute the Lagrange multiplier $\psi$ in order to calculate $l_E(\theta)$. The best way to determine $\psi$ is to use the strategy outlined in \cite{owen1990} and restate the root finding problem as a minimization problem (see \citep[104ff]{owen1990}). Note that for practical computations (\ref{eq:psi}) should be 'numerical zero'. In our simulation experiments even with extremely low (gradient and step) tolerances it was rare to find a solution which fulfilled (\ref{eq:psi}) exactly. Our first attempts to use direct (multivariate) root finding methods failed (e.g. \textit{NAGs c05qb}) because the solutions provided by those algorithms were small but too far from zero and interfered with the tests performance.\footnote{How much deviation from (\ref{eq:psi}) can be tolerated without harming the test performance remains a practically relevant but open research question.}
Therefore we opted for solving the minimization problem. 

After $l_E(\theta)$ is computed the empirical likelihood ratio statistic for testing $H_0: \gamma ={\gamma_0}$ is (see \citep[307]{qin1994}),
\begin{equation*}
EL(\gamma) = 2l_E(\hat{\theta}^{\gamma_0}) - 2l_E(\hat{\theta}) \overset{a}{\sim} \chi_{p}^2.
\end{equation*}
Numerically this procedure is more demanding than the adjustments of the composite likelihood ratios: For $EL$ two Lagrange multipliers need to be computed by searching for the appropriate roots, which is in general faster than the computation of the sensitivity as well as the variability matrix.\footnote{We refrain from presenting computation times alongside our results because those depend strongly on the implementation as well as on the specification of the computer used for the computations. However, the appendix provides those numbers for our example implementation and computer system.}
\\
The expected computation time is the reason we did not include Bootstrap tests into our comparison. Even though the theoretical results in \cite{aerts1999} show that the parametric bootstrap leads to a consistent estimator for the distribution of pseudo-likelihood tests under the null-hypothesis, the specific form of the parametric bootstrap 
(see e.g. \cite{bhat2014}) is set up in a way that two models need to be estimated for every bootstrap sample.\footnote{As discussed in \citep[195]{molenberghs2005} the parametric bootstrap is expected to break down once the assumption regarding the distribution of the error terms is wrong. It might therefore be worthwhile to assess the performance of Bhats bootstrap because from the theoretical standpoint the use of the \ac{SJ}-approximation in the \ac{MACML} pseudo-likelihood is not compatible with the assumption of multivariate normal error terms in the utility function.} For a reasonable number of bootstrap samples this leads to very high computation times. It might still be worth to consider the bootstrap to draw inference from the final model but it is certainly not suited to be part of the model selection process for real world sample sizes.

All in all we have identified four different methods to perform likelihood-ratio-type tests in the \ac{CML} framework. Three methods ($cCLR_1$, $cCLR_2$, $cCLR_3$) are adjustments to the composite likelihood ratio (\ref{eq:CLR}), which rely on estimates of the Godambe information matrix. The fourth method  ($EL$), which is based on empirical likelihood arguments, utilizes only the individual score functions.

\subsection{Information Criteria}

In this section we discuss the use of information criteria for model selection. The two classical criteria are the \ac{AIC} and \ac{BIC}. The \ac{AIC} was developed as an estimator for the \ac{KLD} between two competing models. Naturally the model with minimal Kullback-Leibler divergence would be the model of choice. Because of this the AIC is designed to select the mimimum-\ac{KLD}-model. However, the \ac{AIC} is not an estimate of the \ac{KLD} (see \citet[p. 28ff]{claeskens2008}). 
\\
The \ac{BIC} is designed to select the model with the highest posterior probability when equal prior probability is assigned to each model and vague priors for the model parameters are used (see \citet[p. 286]{burnham2002}). The BIC is an approximation to this quantity with the appealing feature "that the specification of the prior completely disappears in the formula of BIC" (\citet[p. 81]{claeskens2008}).

Two important properties of \ac{IC} are efficiency and consistency. An \ac{IC} is deemed efficient if it selects the model with minimum prediction error. Consistency is defined as the property that an \ac{IC} asymptotically selects the model with the minimum \ac{KLD} and if that condition is satisfied by more than one model the model with the lowest number of parameters. This leads to the selection of the true model (if it is part of the model space) with probability going to one as $n \to \infty$. When assessing BIC and AIC with regard to efficiency, it can be shown that the AIC is efficient while the BIC is not (see \citet[p. 112]{claeskens2008}). On the other hand it can also be established that the BIC is consistent while AIC is not consistent.

Both of those classic \ac{IC} have been adopted for use within the \ac{CML} framework. The \ac{CLAIC} was introduced in \cite{varin2005} and is a variant of the Takeuchi \ac{IC} (TIC) (see e.g. \citep[43f]{claeskens2008}). The major difference between \ac{AIC} and TIC is that the latter is developed without the assumption that the true \ac{DGP} is amongst the candidate models. Given that within the \ac{CML} framework we are willing to make a working independence assumption with respect to the \ac{DGP}, it seems reasonable that the \ac{CLAIC} is closer to the TIC rather than the actual \ac{AIC}. The \ac{CLAIC} selects the model minimizing
  \begin{equation}
 - 2lcml(\hat{\theta}) + 2tr[\hat{J}(\hat{\theta})\hat{H}(\hat{\theta})^{-1}], 
  \end{equation}
  where $\hat{J}(\hat{\theta})$ and $\hat{H}(\hat{\theta})$ are consistent, first-order unbiased
  estimators for $J(\theta_0)$ and $H(\theta_0)$, respectively. When concerned with the \ac{CML} framework it is important to note that the objective function ($cml(\theta)$) is not necessarily a proper likelihood function. Therefore, \cite{varin2005} use a modified definition of the \ac{KLD} to derive the \ac{CLAIC}. This definition focuses on the marginal distributions, but also imposes that the compared \ac{CML} models are composed of marginals with the same dimension. \cite{ng2014} address some of the differences between the \ac{CLAIC} and normal \ac{AIC}. To do so they focus on nested models with tractable likelihoods and make use of the theory of local alternatives. Under those premises they can show that all other things equal the probability to pick the true models is smaller for the \ac{CLAIC} than for the \ac{AIC}. Furthermore, they illustrate that -- as expected -- the \ac{CLAIC} gets closer to the \ac{AIC} for rising dimensions of the marginals.
  
The \ac{BIC} has been adapted in analogy to the definition of \ac{CLAIC} for the \ac{CML} framework by \cite{gao2010}. The \ac{CLBIC}  selects the model minimizing,
  \begin{equation}
-2lcml(\hat{\theta}) + \log(n) tr[\hat{J}(\hat{\theta})\hat{H}(\hat{\theta})^{-1}]
  \end{equation}
  where $\hat{J}(\hat{\theta})$ and $\hat{H}(\hat{\theta})$ are consistent, first-order unbiased
  estimators for $J(\theta_0)$ and $H(\theta_0)$, respectively. \cite{gao2010} show that the \ac{CLBIC} as defined above is a consistent information criterion. Furthermore, they provide an extension for high-dimensional data which is irrelevant in the \ac{MNP} context. Finally note that \ac{CLAIC} and \ac{CLBIC} rely on the sandwich information matrix and that the computational burden is therefore similar to the $cCLR$ tests discussed in the previous section.
\\
Considering the definition of \ac{CLAIC} and CLBIC one notices a subtle dependence of the penalty 
term $tr[\hat{J}(\hat{\theta})\hat{H}(\hat{\theta})^{-1}]$ on the parameter 
$\theta$. As formulated above hence the penalty term is not guaranteed to be a 
strictly monotonous function of the model order. One could evaluate the penalty term always at the biggest model which would guarantee monotonicity.
However, as \ac{CLAIC} and CLBIC is intended to be used also for non-nested comparisons this definition is not suitable.

\section{Model selection: Comparison by simulation}
\label{chap:simu}
After we have discussed several model selection procedures we will assess their respective performance using a simulation exercise. The model under consideration is a  mixed panel \ac{MNP} model with five alternatives. Those alternatives are explained by five explanatory variables (drawn from independent standard normal distributions) and their respective parameters are assumed to be an instance of a multivariate normal distribution with  mean
$\textbf{b} = (1.5, -1, 2, 1, \beta)$ and covariance matrix $\Omega$. The error terms are assumed to be normally distributed with a mean of zero and a diagonal covariance matrix whose entries are fixed at 0.5. 

We generated data sets by drawing values of the vector $\textbf{b}_n$ and the error term from their respective distribution. Based on those values we calculated the utility and the chosen alternative is the one with the highest utility. The data sets have sample sizes 300, 500 or 1000 and in either case we generate 500 of those data sets for each setup to assess the properties of the various selection methods.\footnote{For each sample size a small double-digit number of the Monte Carlo samples was not used in the final analysis due to obvious non-convergence of at least one of the estimators.}

Our aim is to estimate the linear ($\textbf{b}$) as well as the covariance 
parameters. The covariance parameters are estimated using the corresponding Cholesky decomposition ($\Omega = LL'$). In order to find the optimum of the likelihood function we rely on the BFGS-algorithm provided by MATLABs \textit{fminunc} function. To ensure competitive computation times we have derived the respective analytic gradient but other than that relied on the default options. Following \citep{bhat2011a} we initialize the optimizer at the true values and use one random permutation per observation for the \ac{SJ} approximation, which stays the same during the optimization. Note that we always fit the unrestricted and restricted model and that we do not use a simple 'plug-in technique' to compute the restricted pseudo-likelihood/estimate.

In order to compute the \ac{CLAIC}, \ac{CLBIC} and most tests we need estimators for the Godambe information matrix. A simple estimate for the sensitivity matrix ($H(\theta)$) is given by the Hessian of $lcml(\theta)$ evaluated at the \ac{CML} estimate $\hat{\theta}$,
\begin{equation*}
\hat{H}(\hat\theta) = - \frac{1}{N} \sum_{n=1}^N \partial^2 lcml_n(\hat{\theta}).
\end{equation*}
An alternative estimator is derived by exploiting that the respective information identities hold for the marginal likelihoods (see \cite{lindsay2011}), 
\begin{equation*}
\hat{H_1}(\hat\theta) = \frac{1}{N} \sum_{n=1}^N  \sum_{t=1}^{T-1} \sum_{t'=1}^T  \partial lcml_{ntt'}(\hat{\theta}) \partial lcml_{ntt'}(\hat{\theta})',
\end{equation*}
where $lcml_{ntt'}(\theta) = \log cml_{ntt'}(\theta)$ is a pairwise likelihood of the n-th subject. 
\\
The estimator $\hat{H_1}(\hat\theta)$ relies only on gradients, it can be computed on the fly, while the numerical computation of $\hat{H}(\hat\theta)$ sometimes takes longer than the estimation of the corresponding model. Furthermore, like \cite[43f]{claeskens2008} we found that the estimates for $J(\theta)$ and $H(\theta)$ are subject to (at times severe) sampling variability and that this variance is in general lower for $\hat{H_1}$ compared to $\hat{H}$. We, therefore, base our analysis on the estimator $\hat{H_1}$.

For our setup and for \ac{MNP} models in general the sample size $N$ is certainly larger than the number of repeated observations $T$ and, therefore, the variability matrix $J(\theta)$ can be estimated by 
\begin{equation*}
\hat{J}(\hat\theta) = \frac{1}{N} \sum_{n=1}^N \partial lcml_n (\hat{\theta}) \partial lcml_n (\hat{\theta})'. 
\end{equation*}

Alternative methods to compute the Godambe information matrix are a Jackknife-based estimator (for the general idea see \cite[p. 302ff]{joe1997} and for an application see \cite{zhao2005}). Furthermore, it is possible to obtain estimates for $J$ and $H$ by simulation (see \cite{cattelan2016}). We leave those options for further research.

\subsection{Variable Selection}

\label{chap:select}
In this section we address the selection of variables. The setup is simple in that we simulate the addition of a fixed effect. We still assume that the other four variables are \ac{RE}s so $\textbf{b} = (1.5, -1, 2, 1, \beta)$ but the covariance matrix $\Omega$ is only $4 \times 4$. We only consider uncorrelated \ac{RE},

\begin{equation*}
\Omega=
\begin{bmatrix}
 2 & 0 & 0 & 0 \\
0 & 1.5 & 0 & 0\\
0 & 0 & 1 & 0\\
0 & 0 & 0 & 1.2\\
\end{bmatrix}.
\end{equation*}

The decision involves only one parameter, which is either restricted to zero or estimated from the data - $H_0: \textbf{b}_5 = 0$. In order to explore the power of the tests we vary the true $\beta$ from 0.1 to 0.5 by steps of 0.1. This is a simple task, which does not even require the use of a (composite) likelihood-ratio-test, but the results offer a first look into the performance of the various tests. Note that by definition $cCLR_1$, $cCLR_2$  and $cCLR_3$ are the same in the one parameter case.

The results are given in Table~\ref{tab:lin}. First note that we have included the naive composite likelihood ratio test, which is based on using $CLR$ (\ref{eq:CLR}) and a $\chi^2_1$ distribution without any corrections. As shown in the first column the size ($P(\text{test rejects } H_0| H_0)$) of this test is severely inflated. Even though the nominal size is supposed to be $0.05$ the empirical size is nearly 6 times larger. Furthermore, the power ($P(\text{test rejects } H_0| H_1)$) seems satisfactory which is to be expected because the test has a high rejection rate anyway. The size inflation seems to get worse for rising sample sizes.

\begin{table}[H]
\centering
\caption{Variable Selection: Empirical probability of rejecting $H_0$ at 0.05 confidence level, or -- for \ac{CLAIC} and \ac{CLBIC} empirical probability of selecting the larger model,  for various values of $\alpha$ each based on 500 simulated data sets}
\begin{tabular} 
 {l l |  *{6}{c} } 
n & $\beta$ & 0  &  0.1 & 0.2 & 0.3 & 0.4 & 0.5\\ \hline \hline 
300 & $CLR$& 0.304 &0.872 &0.996 &1.000 &1.000 &1.000 \\ 
& $cCLR$& 0.082 &0.600 &0.972 &0.998 &1.000 &1.000 \\ 
 & $EL$& 0.049 &0.533 &0.971 &0.998 &1.000 &1.000 \\ 
 & $CLAIC$& 0.971 &0.997 &1.000 &1.000 &1.000 &1.000 \\ 
 & $CLBIC$& 0.924 &0.989 &0.999 &1.000 &1.000 &1.000 \\ \hline \hline 
500 & $CLR$& 0.346 &0.944 &1.000 &1.000 &1.000 &1.000 \\ 
 & $cCLR$& 0.125 &0.792 &0.996 &1.000 &1.000 &1.000 \\ 
 & $EL$& 0.066 &0.760 &0.997 &0.999 &1.000 &1.000 \\ 
 & $CLAIC$& 0.966 &0.999 &1.000 &1.000 &1.000 &1.000 \\ 
 & $CLBIC$& 0.900 &0.996 &1.000 &1.000 &1.000 &1.000 \\ \hline \hline 
1000 & $CLR$& 0.375 &0.991 &1.000 &1.000 &1.000 &1.000 \\ 
 & $cCLR$& 0.159 &0.940 &0.999 &1.000 &1.000 &1.000 \\ 
 & $EL$& 0.064 &0.946 &1.000 &1.000 &1.000 &1.000 \\ 
 & $CLAIC$& 0.947 &0.999 &1.000 &1.000 &1.000 &1.000 \\ 
 & $CLBIC$& 0.863 &0.996 &1.000 &1.000 &1.000 &1.000 \\ \hline 
\hline 
 \end{tabular} 

\label{tab:lin}
\end{table}

Regarding the various corrected tests, which are collectively presented as $cCLR$, our simulations reveal that the corrections work but that those tests still have an inflated size. Again, the nominal size is supposed to be $0.05$ the empirical sizes we observe are in the worst case nearly 3 times larger for $cCLR_1$, $cCLR_2$ and $cCLR_3$. All tests reject $H_0$ more often than theory would suggest.  We  observe that the size inflation gets more pronounced as the sample size gets larger. By noting that those tests are constructed by premultiplying $CLR$ (\ref{eq:CLR}) with a corrective term, this hints to the fact that the correction is not able to keep up with the inflation inherited from $CLR$. The power ($P(\text{test rejects } H_0| H_1)$) for each test is satisfying and improves for larger sample sizes. However, the second column reveals problems for small deviations in small data sets, which is to be expected. Finally, note that the power results are not adjusted for the difference between nominal and empirical size. 

For the the empirical likelihood test ($EL$), which is not based on a correction of (\ref{eq:CLR}), the empirical size is closer to the nominal size and not or only slightly inflated. Furthermore, the results suggest that the performance of the empirical likelihood test is in general superior compared to the correction-based tests because the power is higher or equal for all values of $\beta$. This dominance is especially pronounced for the sample size 1000, which is not surprising because as discussed in section~\ref{chap:tests} empirical likelihoods methods have the drawback of slow convergence.

The performance difference between the correction-based tests and the $EL$ test is also depicted in Figure~\ref{fig:ecdf} where we have plotted the empirical distribution function of the empirically observed values of the test statistics as well as the $\chi^2$ distribution they aim to approximate. It is clearly visible that the $EL$ closely follows the desired $\chi^2$ distribution while the other tests have too many large values. Even though  this is a one-dimensional example, where we only need to divide by the relevant eigenvalue to attain the desired distribution $\chi_1^2$ the estimation of this eigenvalue seems to be imprecise.

\begin{figure}%
{\caption{Empirical distribution function of the test statistics for sample size 1000 and the theoretical distribution of the corresponding $\chi^2$ distribution with one degree of freedom.}
\label{fig:ecdf}}%
\begin{tabular}{@{}r@{}} 
\includegraphics[width=\textwidth]{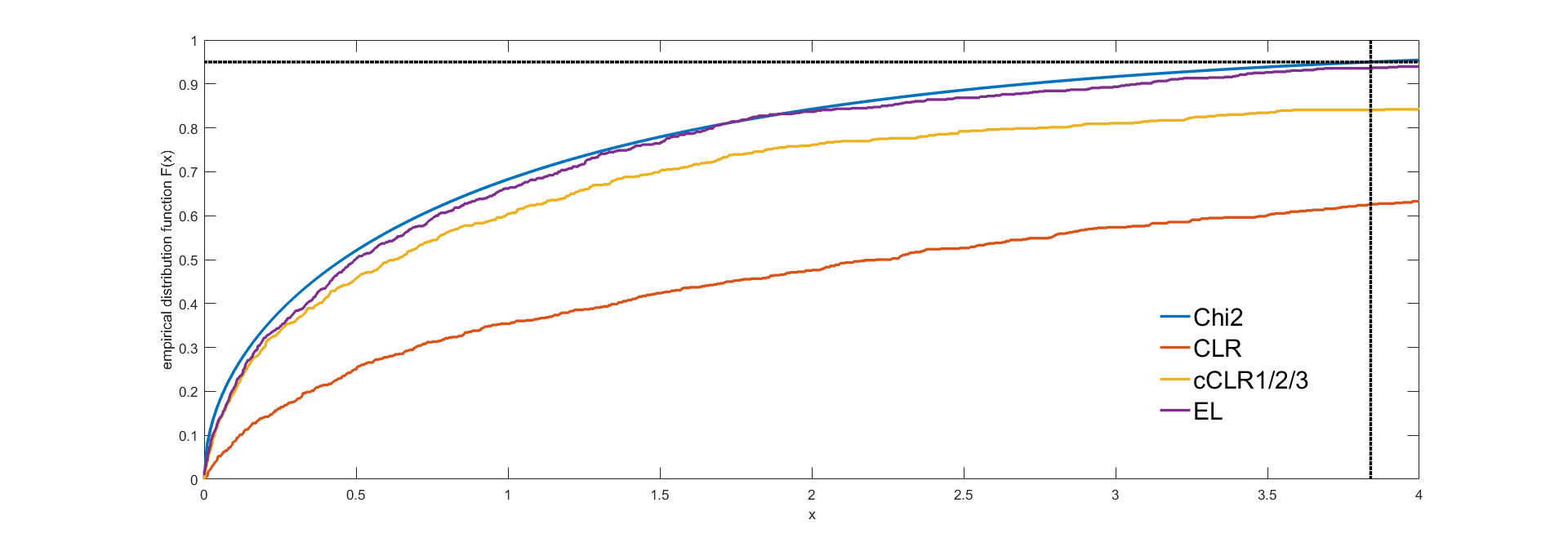}\\
\end{tabular}
\end{figure}

In Table~\ref{tab:lin} the last two rows for every sample size contain information regarding the performance of the information criteria. We choose the show the empirical probability that the larger model is selected which should be close to one for all values of $\beta$ except for the first column. We see that regardless of the underlying true model both criteria are highly in favor of the larger model. An interesting fact to note from the first column of Table~\ref{tab:cov} is that even the \ac{CLBIC}  is predominantly selecting the 'wrong' model just like the \ac{CLAIC} despite being a consistent \ac{IC}. With rising sample sizes the probability to select the 'true model' goes up, which is a reminder that this property is only adhered asymptotically.

\subsection{Choosing Covariance structures}

\label{chap:cov}
In this section we address the issue of choosing between different covariance structures for the \ac{RE}. More specifically we aim to test whether the off-diagonal entries of $\Omega$ are zero (diagonal covariance, $q=10$). For that reason $\beta$ is fixed as $\beta = 2$. In order to assess the empirical size of the test we first simulate 1000 data sets from a model with diagonal covariance ($\Omega_{I} = I_5$). We address the power of the tests using covariance matrices that mimic a Toeplitz matrices,

\begin{equation*}
\Omega_{\alpha} =
\begin{bmatrix}
 1 & \alpha & \alpha^2 & \alpha^3 & 0\\
\alpha & 1 & \alpha & \alpha^2 & 0\\
\alpha^2 & \alpha & 1 & \alpha & 0\\
\alpha^3 & \alpha^2 & \alpha & 1 &0\\
0& 0 & 0 & 0 & 1
\end{bmatrix}.
\end{equation*}

In order to represent various degrees of correlation between the \ac{RE}s we vary the true $\alpha$ from 0.1 to 0.4 by steps of 0.1. The decision involves the ten off-diagonal parameters of $\Omega$ 
which in the estimation are either all restricted to zero or all estimated from the data without restrictions.

Table~\ref{tab:cov} shows that for the current setup the performance of the tests differs from that of the last section. First, the problems of the naive test are even more pronounced than for the variable selection task as $H_0$ is almost always rejected. However, contrary to the results of the last section the size inflation of the corrected test ($cCLR_1, cCLR_2, cCLR_3$) is less severe but the difference between nominal and empirical size is, again, growing with the sample size.  Furthermore, while $cCLR_1$ and $cCLR_2$ are again almost equal in performance there is a performance difference in comparison to $cCLR_3$. The latter has an empirical size which is more inflated for small samples but the power is higher even when (for sample size 1000) the size inflation is equal to the one of $cCLR_1$ and $cCLR_2$. 

For this setup the $EL$ test suffers from size inflation, which is for the smallest sample size more severe than for the correction based test. As in the previous example the $EL$ test has the empirical size which is closest to the nominal level for a sample size of 1000. This serves as a reminder that empirical likelihood methods might not be adequate for small sample sizes. Furthermore, for sample size 1000 the power is lower than that of the correction-based tests at least for $\alpha = \{0.1, 0.2\}$. 

Regarding the \ac{IC}s it is clear that both \ac{CLAIC} and \ac{CLBIC} again favor the larger model, that is the unrestricted correlation structure, over the diagonal correlation matrix. This effect is extremely pronounced as the small model is almost never selected and illustrates that selecting the covariance structure using \ac{IC} will not work.

\begin{table}[H]
\centering
\caption{Covariance Structure: Empirical probability of rejecting $H_0$ at 0.05 confidence level, or -- for \ac{CLAIC} and \ac{CLBIC} empirical probability of selecting the larger model,  for various values of $\alpha$ each based on 500 simulated data sets}
\begin{tabular} 
 {l l |  *{5}{c} } 
n & $\alpha$ & 0  &  0.1 & 0.2 & 0.3 & 0.4 \\ \hline \hline 
300 & $CLR$& 0.926 &0.938 &0.975 &0.983 &0.996  \\ 
 & $cCLR_1$& 0.077 &0.178 &0.415 &0.726 &0.924  \\ 
 & $cCLR_2$& 0.071 &0.171 &0.409 &0.723 &0.923   \\ 
 & $cCLR_3$& 0.108 &0.198 &0.434 &0.742 &0.926  \\ 
 & $EL$& 0.165 &0.253 &0.508 &0.774 &0.949  \\ 
 & $CLAIC$& 1.000 &1.000 &1.000 &1.000 &1.000  \\ 
 & $CLBIC$& 0.999 &0.999 &0.999 &1.000 &1.000  \\ \hline \hline 
500 & $CLR$& 0.905 &0.947 &0.984 &0.992 &0.999  \\ 
 & $cCLR_1$& 0.091 &0.231 &0.594 &0.907 &0.986  \\ 
 & $cCLR_2$& 0.090 &0.226 &0.587 &0.906 &0.984  \\ 
 & $cCLR_3$& 0.109 &0.254 &0.622 &0.908 &0.987 \\ 
 & $EL$& 0.099 &0.221 &0.549 &0.879 &0.989 \\ 
 & $CLAIC$& 0.999 &0.999 &1.000 &1.000 &1.000  \\ 
 & $CLBIC$& 0.999 &0.999 &0.997 &1.000 &1.000  \\ \hline \hline 
1000 & $CLR$& 0.929 &0.959 &0.985 &1.000 &1.000 \\ 
 & $cCLR_1$& 0.106 &0.339 &0.869 &0.993 &1.000 \\ 
 & $cCLR_2$& 0.104 &0.335 &0.867 &0.993 &1.000  \\ 
 & $cCLR_3$& 0.104 &0.354 &0.865 &0.993 &1.000 \\ 
 & $EL$& 0.078 &0.233 &0.794 &0.991 &1.000  \\ 
 & $CLAIC$& 0.998 &0.996 &1.000 &1.000 &1.000 \\ 
 & $CLBIC$& 0.994 &0.994 &0.998 &1.000 &1.000 \\ \hline 
\hline 
 \end{tabular} 

\label{tab:cov}
\end{table}

\section{Model averaging}
\label{chap:MA}

An alternative to model selection is model averaging (\ac{MA}). In the following we only refer to frequentist model averaging because we are concerned with \ac{MACML} estimation, a frequentist method. However, it is worth to note that there exists a large body of literature on Bayesian \ac{MA}.
\\
In \ac{MA} instead of selecting just one model, the estimates from several competing models are combined in a weighted average. 
In this section we assume that the models are labeled consecutively as 
$m = 1,2,...,M$. A model is characterized by setting some coordinates to 
prespecified values (often zero) and letting the other coordinates vary 
freely. The main case in this respect is regressor selection where setting
a coordinate to zero implies that the corresponding regressor does not 
influence the outcome. 
\\
The estimated $j$-th parameter after model averaging then might be written as
\begin{equation}
\hat{\theta}_j^{MA} = \sum_{m =1}^{M} \hat{w}_m (Z) \hat{\theta}_j^m,
\end{equation}
where the weight is defined 
such that $\sum_{m =1}^M \hat{w}_m (Z)= 1$ and where
$\hat \theta_j^m = \theta_{0,j}$ if the parameter is not contained in the model. 
Different versions of \ac{MA} methods use different weights $\hat{w}_m (Z)$. 

A simple class of \ac{MA} methods is based on information criteria where the weight of every model is derived from its respective \ac{AIC} or \ac{BIC} value. Those ad-hoc methods, which originated from \cite{buckland1997} aim to incorporate the uncertainty involved in model selection. The weights for a given sample are defined as

\begin{equation}
\label{eq:AICw}
\hat{w}_m (Z) = \frac{exp(0.5 IC_{m})}{\sum_{s \in M} exp(0.5 IC_{s})},
\end{equation}
where in our case $IC_{m}$ is either the \ac{CLAIC} or \ac{CLBIC} calculated for model $m$. Note that in order to simplify the calculations the $IC_{m}$ are often normalized to $\Delta(IC_{m}) = IC_{m} - \max_{s \in M} IC_{s}$, in either case the weights sum up to one by definition. Furthermore all weights  $\hat{w}_m (Z)$ are positive by definition. A theoretical justification for this model averaging strategy is provided in \cite{burnham2002} where it is shown that asymptotically even when those weights are computed from the \ac{AIC} they might be interpreted as (Bayesian) posterior probability that the model $m$ is correct (see section 6.4.5. in \citep{burnham2002}).

Alternatively asymptotically optimal \ac{MA} methods base the weight choice on the minimization of some criterion function, e.g. the asymptotic \ac{MSE} in the parameter or in prediction of certain focus quantities like conditional probabilities. These techniques typically are motivated within 
a local misspecification framework proposed in \cite{hjort2003}: 
The parameter vector $\theta = [\tau',\gamma']' \in {\mathbb R}^d, \gamma \in {\mathbb R}^p$ is partitioned into a set of parameters that are shared by all candidate models ($\tau$) and some additional parameters which only appear in some models ($\gamma$). Therefore, there is a range of models starting with a 'narrow model' where 
$\gamma = \gamma_0$ (often $\gamma_0=0$), which contains only the mandatory parameters and ending 
with the 'wide model', which is based on the full set of parameters. Within the local misspecification framework geared towards difficult to separate cases the data is assumed to be generated by the parameters $[\tau_0',\gamma_0' + \delta'/\sqrt{N}]'$ converging to $\theta_0$. In this setting the bias for not including a variable in the model is of the same magnitude as the squared error due to sampling variability, making the decision of whether to include the variable corresponding to $\gamma$ based on the available amount of data a hard decision. 
\\
\cite{wan2014}) show how to derive such a setup for Multinomial Logit Models in connection with maximum likelihood estimation, but  -- to the best of our knowledge -- no results are available neither for \ac{MNP} nor for \ac{CML} methods in general up to now.\\
Following \cite{hjort2003} an optimal \ac{MA} is based on the following insight:  

\begin{theorem} \label{thm:score}
Let $U_n = \partial lcml_n(\theta_0)/\partial \theta$
and let $\bar{U_n} = N^{-1} \sum_{n=1}^N U_n$. 
Further let $J = var_0(U_n) \in {\mathbb R}^{d \times d}$ (variance under $\theta_0$) and $H = -\mathbb{E}_0 \partial^2 lcml_n(\theta_0), H = [H_\tau,H_\gamma], H_\tau \in {\mathbb R}^{d \times (d-p)}, H_\gamma \in {\mathbb R}^{d \times p}$. 
\\
Let the model be such that the function $lcml_n(\theta)$ is three times continuously differentiable in a 
neighborhood of $\theta_0$ where all derivatives are dominated by functions with finite means under 
$\theta_0$. Furthermore the variables $U_n$ have finite fourth moments under $\theta_0$. 
\\
Then if the data is generated as iid draws under the sequence of local alternatives $[\tau_0',\gamma_0' + \delta'/\sqrt{N}]'$ we have:
\begin{equation} \label{equ:asyScore} 
\sqrt{N} \bar{U_n} \stackrel{d}{\to} U, \quad U \sim N(H_\gamma \delta, J)
\end{equation}
\end{theorem}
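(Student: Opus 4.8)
The plan is to read \eqref{equ:asyScore} as a central limit theorem for a triangular array. Under the drifting data-generating parameter $\theta_N := [\,\tau_0',\ \gamma_0' + \delta'/\sqrt{N}\,]'$ the summands $U_n = \partial lcml_n(\theta_0)$ are, for each fixed $N$, \ac{iid}, but their common law depends on $N$ through $\theta_N$. I would therefore split
\[
\sqrt{N}\,\bar{U_n} = \sqrt{N}\,\mathbb{E}_{\theta_N}[U_n] + \frac{1}{\sqrt{N}}\sum_{n=1}^N \bigl(U_n - \mathbb{E}_{\theta_N}[U_n]\bigr),
\]
treat the deterministic drift $\sqrt{N}\,\mathbb{E}_{\theta_N}[U_n]$ and the centred, normalised sum separately, and recombine them with Slutsky's lemma once each piece has a limit.

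The crucial observation for the drift is that the composite score is an unbiased estimating function, so $\mathbb{E}_{\theta_N}[\partial lcml_n(\theta_N)] = 0$ (this is the zero-mean property noted in Section~\ref{chap:macml}, applied at the parameter $\theta_N$ under which the data are actually generated). Since $\theta_0 - \theta_N = [\,0',\,-\delta'/\sqrt{N}\,]'$, I would Taylor expand the score evaluated at $\theta_0$ about $\theta_N$,
\[
\partial lcml_n(\theta_0) = \partial lcml_n(\theta_N) + \partial^2 lcml_n(\theta_N)\,(\theta_0 - \theta_N) + R_n,
\]
and take expectations under $\theta_N$. The first term vanishes by unbiasedness; the second contributes $\mathbb{E}_{\theta_N}[\partial^2 lcml_n(\theta_N)]\,(\theta_0-\theta_N) = -H(\theta_N)\,[\,0',\,-\delta'/\sqrt{N}\,]' = H_\gamma(\theta_N)\,\delta/\sqrt{N}$, using the block partition $H = [H_\tau, H_\gamma]$ so that $H$ applied to a vector supported on the $\gamma$-coordinates reduces to $H_\gamma$. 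Multiplying by $\sqrt{N}$ gives $\sqrt{N}\,\mathbb{E}_{\theta_N}[U_n] = H_\gamma(\theta_N)\,\delta + \sqrt{N}\,\mathbb{E}_{\theta_N}[R_n]$, and it remains to argue $H_\gamma(\theta_N) \to H_\gamma$ by continuity and $\sqrt{N}\,\mathbb{E}_{\theta_N}[R_n] \to 0$.

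For the centred sum I would first show $var_{\theta_N}(U_n) \to var_{\theta_0}(U_n) = J$; this follows because the first two moments of $U_n$ depend continuously on the data-generating parameter, which the finite-fourth-moment hypothesis together with the domination conditions guarantees via uniform integrability as $\theta_N \to \theta_0$. I would then invoke a triangular-array central limit theorem for the row-wise \ac{iid} centred variables $U_n - \mathbb{E}_{\theta_N}[U_n]$: reducing to scalars by the Cram\'er--Wold device and checking Lyapunov's condition with exponent $2+\delta = 4$, the uniformly bounded fourth moments yield $\tfrac{1}{\sqrt{N}}\sum_{n=1}^N (U_n - \mathbb{E}_{\theta_N}[U_n]) \stackrel{d}{\to} N(0, J)$. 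Combining this with the deterministic limit $H_\gamma \delta$ of the drift via Slutsky's lemma gives $\sqrt{N}\,\bar{U_n} \stackrel{d}{\to} N(H_\gamma \delta, J)$, which is \eqref{equ:asyScore}.

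The main obstacle is the drift term: everything hinges on showing that evaluating the unbiased composite score at $\theta_0$ while the data are generated at $\theta_N$ produces exactly the bias $H_\gamma \delta/\sqrt{N}$ in the limit, with the Taylor remainder $R_n$ negligible after the $\sqrt{N}$ scaling. This is precisely where the hypotheses are spent: the three-times continuous differentiability with derivatives dominated by functions of finite mean under $\theta_0$ bounds $|\mathbb{E}_{\theta_N}[R_n]|$ by a constant times $|\theta_0 - \theta_N|^2 = O(1/N)$, so that $\sqrt{N}\,\mathbb{E}_{\theta_N}[R_n] = O(1/\sqrt{N}) \to 0$, while the same domination delivers the continuity $H_\gamma(\theta_N) \to H_\gamma$. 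An alternative route would be Le Cam's third lemma: compute the joint limit under $\theta_0$ of $\sqrt{N}\,\bar{U_n}$ and the full-model log-likelihood ratio, identify the relevant covariance as $H_\gamma$ through the generalized information equality, and read off the shifted mean. I would favour the direct triangular-array argument, since it avoids invoking local asymptotic normality for the true \ac{MNP} likelihood and works entirely with the composite score.
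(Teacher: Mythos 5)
Your proposal is correct, but it follows a genuinely different route from the paper. The paper's entire proof of Theorem~\ref{thm:score} is a citation: it invokes Lemma~3.1 of \cite{hjort2003}, asserts that the stated differentiability, domination and fourth-moment hypotheses directly imply the conditions (C2) and (C4) required there, and then remarks that, because the second Bartlett identity fails for composite likelihoods, the sensitivity matrix $H$ takes over the role that the score variance plays in the maximum-likelihood version of that lemma, so the limit is $N(H_\gamma\delta, J)$ rather than a law in which only $J$ appears. Your argument instead proves \eqref{equ:asyScore} from scratch: the drift-plus-centred-sum decomposition, the Taylor expansion of the score at $\theta_0$ about the data-generating parameter $\theta_N$ together with unbiasedness of the composite score at $\theta_N$, the $O(1/N)$ remainder bound coming from the dominated third derivatives, and a Cram\'er--Wold/Lyapunov triangular-array CLT for the centred part. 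What your route buys is exactly the point the paper only asserts: the computation $\mathbb{E}_{\theta_N}[\partial^2 lcml_n(\theta_N)](\theta_0-\theta_N) = H_\gamma(\theta_N)\delta/\sqrt{N}$ makes transparent why the mean of the limit involves $H_\gamma$ while the variance remains $J$, i.e.\ you actually establish the ``slight difference'' from the cited lemma instead of appealing to it; your preference for the direct argument over Le Cam's third lemma is also apt here, since contiguity arguments are tailored to genuine likelihoods and the \ac{CML} is not one. What the paper's route buys is brevity and the delegation of regularity bookkeeping to \cite{hjort2003}. One caution for your version: the domination and moment hypotheses are stated under $\theta_0$, whereas your expectations are taken under $\theta_N$; the transfer requires the continuity/uniform-integrability step you gesture at (domination in a neighbourhood of $\theta_0$, valid for the shifted laws), and making that step explicit is precisely what the cited conditions (C2) and (C4) formalize, so it should not be left entirely implicit in a self-contained proof.
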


The theorem follows from Lemma 3.1. in \cite{hjort2003} as the assumptions directly imply assumptions (C2) and (C4) there. A slight difference occurs as in the current case the second derivative of the log-likelihood does not equal the negative variance of the score and thus different matrices $H$ and $J$ appear in the distribution of $U$ instead of only $J$. 
\\
This theorem can be applied to the current setting, if the number of choices is constant over individuals (generalizations to cases of unequal number of choices per individual are immediate). 
Note in this respect that using the SJ approximation to the \ac{MVNCDF} implicitly defines a model by providing a mapping between parameters and regressors onto choice probabilities. 
The corresponding model will be close to the \ac{MNP} but not identical. Consequently these two models and corresponding estimation methods need to be dealt with separately. 

\begin{theorem} \label{thm:MNP}
Assume that the data are generated by a \ac{MNP} with $T$ repeated choices for each individual, 
where the regressors $\| X_{ntk} \| \le M_X, \| \tilde X_{ntk} \| \le \tilde M_X$ are uniformly 
bounded. Furthermore the parameter set is compact and such that $\| \Sigma^{-1} \| \le M_\Sigma$. \\
In this case the conditions of Theorem~\ref{thm:score} hold for the 
\ac{MNP} model.
\\
The same conclusions hold for data generated according to model implied by the \ac{SJ} 
approximation.  
\end{theorem}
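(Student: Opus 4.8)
The plan is to verify directly the three hypotheses of Theorem~\ref{thm:score} for the map $\theta \mapsto lcml_n(\theta) = \sum_{t<t'} \log cml_{ntt'}(\theta)$: that it is three times continuously differentiable on a neighbourhood of $\theta_0$, that its derivatives up to third order are dominated by functions of $(Y_n,X_n)$ with finite mean under $\theta_0$, and that $U_n = \partial lcml_n(\theta_0)$ has finite fourth moments. Since $T$ is fixed the sum over pairs is finite, so it suffices to treat a single pairwise term. The unifying observation is that, under the stated assumptions, the pair $(b,\mathbf{R})$ appearing in~(\ref{eq:CDF}) ranges over a \emph{compact} set as $\theta$ ranges over the compact parameter set and the regressors range over their bounded domain; on such a set all the quantities we need are continuous, hence bounded, provided (i) $\mathbf{R}$ stays positive definite with smallest eigenvalue bounded away from zero and (ii) the CDF itself stays bounded away from zero.

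First I would establish smoothness. Each $cml_{ntt'}(\theta)$ equals $\Phi_{2K-2}(b(\theta);\mathbf{0},\mathbf{R}(\theta))$, a composition of the multivariate normal CDF with the maps $\theta \mapsto b(\theta)$ and $\theta \mapsto \mathbf{R}(\theta)$. These latter maps are smooth functions of the entries of $\beta,\Omega,\Sigma$ on the region where the differenced-utility variances are positive (the correlation normalisation divides by their square roots), while $\Phi_{2K-2}$ is $C^\infty$ in $(b,\mathbf{R})$ on the region $\mathbf{R}\succ 0$, because its successive derivatives in $b$ and in the entries of $\mathbf{R}$ are given by Plackett-type formulas involving the smooth Gaussian density and lower-dimensional CDFs. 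The point at which the assumption $\|\Sigma^{-1}\|\le M_\Sigma$ enters is exactly here: it forces the differenced-utility covariance, and therefore $\mathbf{R}$, to have smallest eigenvalue bounded away from zero uniformly, so the composition stays in the smooth region. Taking logarithms then keeps us $C^3$ as soon as the pairwise probability is positive.

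The main obstacle is the lower bound $cml_{ntt'}(\theta)\ge c>0$, which is needed both to take the logarithm and to dominate the derivatives of $\log\Phi_{2K-2}$ (these involve negative powers of $\Phi_{2K-2}$). I would argue as follows: bounded regressors and compactness of the parameter set make the upper integration limits $b(\theta)$ bounded, and the previous step makes $\mathbf{R}(\theta)$ a well-conditioned correlation matrix; hence $(b,\mathbf{R})$ lives in a compact subset of $\{\mathbf{R}\succ 0\}$ on which the strictly positive, continuous function $\Phi_{2K-2}$ attains a positive minimum $c$. With this in hand every derivative of $\log\Phi_{2K-2}$ up to order three is a continuous function of $(b,\mathbf{R})$ on a compact set and of the finitely many admissible choice patterns $(y_{nt},y_{nt'})$, hence uniformly bounded; domination then holds with constant bounds, and $U_n$, being a bounded function of $(Y_n,X_n)$, has finite moments of every order, in particular finite fourth moments. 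This verifies the conditions of Theorem~\ref{thm:score} for the \ac{MNP}.

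For data generated by the \ac{SJ}-implied model the same scheme applies, with one genuine complication. The approximation $\hat{P}^{SJ}$ is assembled from univariate and bivariate CDFs and from the linear projection $\hat p^{3|12}$ built out of $\mathbf{q}$ and $\mathbf{Q}^{-1}$; since the excerpt already records that $\mathbf{Q}(b,\mathbf{R})$ has smallest eigenvalue bounded away from zero whenever $b$ is bounded and $\mathbf{R}$ is well-conditioned, $\mathbf{Q}^{-1}$ is $C^\infty$ there and the whole approximation is again a smooth function of $(b,\mathbf{R})$, hence of $\theta$, near $\theta_0$. The difficulty is that, unlike a genuine probability, $\hat{P}^{SJ}$ carries no a priori guarantee of being positive. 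I would circumvent this locally rather than globally: Theorem~\ref{thm:score} only requires its hypotheses on a neighbourhood of $\theta_0$, so it suffices to check that $\hat{P}^{SJ}(\theta_0)>0$ (which holds because at $\theta_0$ the approximation is close to the true, strictly positive pairwise probability) and then invoke continuity to obtain a neighbourhood on which $\hat{P}^{SJ}$ stays above a positive constant. On that neighbourhood the compactness-and-continuity argument of the previous paragraph transfers verbatim, yielding the $C^3$ property, the constant domination and the finite fourth moments, which completes the verification for the \ac{SJ} model.
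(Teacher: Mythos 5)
Your proposal follows essentially the same route as the paper's (very terse) proof: smoothness of the multivariate normal CDF in its integration limits and correlation entries via Plackett-type formulas, combined with compactness of the parameter set and boundedness of the regressors, which confine $(b,\mathbf{R})$ to a compact set where $\mathbf{R}$ is well-conditioned, so that all derivatives up to third order are uniformly bounded, yielding the domination and finite-fourth-moment conditions; the SJ case rests likewise on the fact that only Gaussian CDFs, PDFs and the well-conditioned $\mathbf{Q}^{-1}$ appear. The paper explicitly omits the details, so your added care about the positive lower bound on the pairwise probability needed for the logarithm --- and especially the local positivity argument for $\hat{P}^{SJ}$ near $\theta_0$, which the paper's ``the result is immediate'' glosses over even though it elsewhere concedes the approximation need not lie in $[0,1]$ --- fills in genuine gaps in the sketch rather than constituting a different approach.
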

The proof of the theorem is based on the fact that the \ac{MVNCDF} is continuously 
differentiable in all its arguments (see \cite{Plackett} for derivatives with respect to entries 
in the correlation matrix; derivatives with respect to upper integration limits are obviously 
given by the corresponding PDF). Thus also higher order derivatives are continuously 
differentiable. The bounds on the regressors and the eigenvalues of the covariance matrix 
imply bounds on the upper limits as well as the eigenvalues of the correlation matrices occurring in \eqref{eq:CDF}. 
For the \ac{SJ} case also only Gaussian CDFs and PDFs occur and hence the result is immediate. Details are omitted. 
\\
Standard asymptotic expansions of the score in combination with the mean value theorem then provide the following consequence of Theorem~\ref{thm:score}: 
\begin{theorem} \label{thm:est} 
Under the conditions of Theorem~\ref{thm:MNP} let $\hat \theta_m$ denote the estimator (over the compact parameter set $\Theta_m$ of which $\theta_0$ is an interior point) of the model $m$ defined via the fact that in $\gamma$ for the estimation certain entries are chosen according to $\gamma_0$ while the remaining are estimated. Then 
$$
\sqrt{N} 
(\hat \theta_m - \theta_0) \stackrel{d}{\to} \Lambda_m U 
$$
for some matrix $\Lambda_m \in {\mathbb R}^{d \times d}$. 
\\
If furthermore the function $\mu(\theta) \in {\mathbb R}$ is continuously differentiable at $\theta_0$ with gradient $\partial_\theta \mu(\theta_0) = [\partial_\tau \mu(\theta_0),\partial_\gamma \mu(\theta_0)]$ then
$$
\sqrt{N}(\mu(\hat \theta_m)-\mu( [\tau_0',\gamma_0' + \delta'/\sqrt{N}]')) \stackrel{d}{\to}
\partial_\theta \mu \Lambda_m U - \partial_\gamma \mu \delta
$$
and hence in the limit is normally distributed with mean $\lambda_m \delta
=(\partial_\theta\mu \Lambda_m H_\gamma - \partial_\gamma  \mu) \delta$ and variance $ \partial_\theta\mu \Lambda_m J 
(\partial_\theta\mu \Lambda_m)'$. Consequently $N$ times the mean
square error $MSE_m$ for the model $m$ 
asymptotically equals 
$$
N MSE_m = \lambda_m  \delta \delta' 
\lambda_m ' + \partial_\theta\mu \Lambda_m J 
(\partial_\theta\mu \Lambda_m)'.
$$
\end{theorem}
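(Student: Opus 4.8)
The plan is to derive the first display from a one-term mean value expansion of the constrained score equations, and then to obtain the law of $\mu(\hat\theta_m)$ by the delta method, reading off the mean, variance and $MSE$ from the limit $U\sim N(H_\gamma\delta,J)$ furnished by Theorem~\ref{thm:score}. First I would record consistency: under the drifting truth $[\tau_0',\gamma_0'+\delta'/\sqrt N]'\to\theta_0$ the constrained pseudo-true value over the compact set $\Theta_m$ is $\theta_0$ itself, and the domination and compactness hypotheses carried over from Theorem~\ref{thm:MNP} give a uniform law of large numbers for $N^{-1}lcml$, whence $\hat\theta_m\stackrel{P}{\to}\theta_0$.

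Next I would write the estimating equations for $\hat\theta_m$: the score coordinates attached to the freely estimated parameters (all of $\tau$ together with the subset of $\gamma$ left unrestricted in model $m$) vanish at $\hat\theta_m$, while the remaining $\gamma$ coordinates are held at $\gamma_0$. Writing $\bar s_N(\theta):=N^{-1}\sum_{n=1}^N\partial lcml_n(\theta)$, so that $\bar s_N(\theta_0)=\bar U_n$, a coordinatewise mean value expansion of the vanishing components about $\theta_0$ gives $0=[\sqrt N\,\bar s_N(\theta_0)]_{\mathrm{free}}+[\partial\bar s_N(\bar\theta_m)]_{\mathrm{free},\cdot}\,\sqrt N(\hat\theta_m-\theta_0)$ with the intermediate points $\bar\theta_m$ lying between $\hat\theta_m$ and $\theta_0$. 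Three times continuous differentiability with dominated derivatives makes the Hessian uniformly continuous, so consistency yields $\partial\bar s_N(\bar\theta_m)\stackrel{P}{\to}-H$. Imposing that the restricted coordinates of $\sqrt N(\hat\theta_m-\theta_0)$ vanish reduces the linearised system to one solvable for the free coordinates through the relevant blocks of $H$; assembling this solution with the zero rows produces a single matrix $\Lambda_m\in\mathbb R^{d\times d}$ that depends only on $H$ and the selection pattern of model $m$, and $\sqrt N(\hat\theta_m-\theta_0)=\Lambda_m\sqrt N\,\bar s_N(\theta_0)+o_P(1)$. Theorem~\ref{thm:score} with Slutsky then gives $\sqrt N(\hat\theta_m-\theta_0)\stackrel{d}{\to}\Lambda_m U$; as a check, $\Lambda_m=H^{-1}$ for the wide model and $\Lambda_m$ reduces to the inverse of the $\tau$-block padded with zeros for the narrow model.

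For the focus functional I would Taylor expand at $\theta_0$, so that $\sqrt N(\mu(\hat\theta_m)-\mu(\theta_0))=\partial_\theta\mu\,\sqrt N(\hat\theta_m-\theta_0)+o_P(1)\stackrel{d}{\to}\partial_\theta\mu\,\Lambda_m U$, and separately note the deterministic recentering $\sqrt N(\mu([\tau_0',\gamma_0'+\delta'/\sqrt N]')-\mu(\theta_0))\to\partial_\gamma\mu\,\delta$. Subtracting yields the stated limit $\partial_\theta\mu\,\Lambda_m U-\partial_\gamma\mu\,\delta$, which is normal with mean $\partial_\theta\mu\,\Lambda_m\mathbb E[U]-\partial_\gamma\mu\,\delta=(\partial_\theta\mu\,\Lambda_m H_\gamma-\partial_\gamma\mu)\delta=\lambda_m\delta$ and variance $\partial_\theta\mu\,\Lambda_m J(\partial_\theta\mu\,\Lambda_m)'$, using $\mathbb E[U]=H_\gamma\delta$ and $\mathrm{var}(U)=J$. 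The $MSE$ formula is then squared asymptotic bias plus asymptotic variance, i.e.\ $\lambda_m\delta\delta'\lambda_m'+\partial_\theta\mu\,\Lambda_m J(\partial_\theta\mu\,\Lambda_m)'$; upgrading the distributional limit to convergence of the scalar second moment needs a uniform integrability argument, supplied by the compactness of $\Theta_m$ and the finite fourth moments of $U_n$.

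The main obstacle I anticipate is not the algebra but the handling of the moving data generating parameter: one must check that consistency and the Hessian convergence $\partial\bar s_N(\bar\theta_m)\stackrel{P}{\to}-H$ persist when the truth drifts with $N$, and that the $MSE$ limit is the genuine rescaled risk rather than merely the formal second moment of the weak limit. The clean device is contiguity of the local alternatives to the fixed $\theta_0$, which lets all $o_P(1)$ and $O_P(1)$ statements established under $\theta_0$ transfer to the drifting sequence, the drift entering only through the shifted mean $H_\gamma\delta$ of $U$ already isolated in Theorem~\ref{thm:score}.
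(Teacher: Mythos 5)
Your proof follows essentially the same route as the paper's: a mean value expansion of the free-coordinate score equations about $\theta_0$, convergence of the scaled Hessian to the corresponding block of $-H$, solving the linearised system so that $\Lambda_m$ is the inverted free block assembled according to the selection pattern (the paper writes this as $\Lambda_m$ built from $(\pi_m H \pi_m')^{-1}\pi_m$ with $\pi_m$ the projection onto free coordinates), followed by the delta method and reading off mean, variance and $MSE$ from $U \sim N(H_\gamma \delta, J)$. The only difference is one of rigor, not of route: the paper dismisses consistency, Hessian convergence under the drifting truth, and the $MSE$ limit as standard or obvious, whereas you supply the contiguity and uniform-integrability arguments that make those steps precise.
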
 
\begin{proof}
The proof follows standard mean value expansions: 
As $\hat \theta_m$ maximizes the CML function at an interior point of 
the parameter set, its derivative at the estimator is zero:
$$
0 = \partial_{\theta_m} lcml(\hat \theta_m) = 
\partial_{\theta_m} lcml(\theta_0) + 
\partial_{\theta_m \theta_m}^2 lcml(\bar \theta) ( \hat \theta_m - \theta_0)
$$
Here $\bar\theta$ denotes an intermediate value. Standard theory implies that $\partial_{\theta_m \theta}^2 lcml(\bar \theta) \to -\pi_m H \pi_m'$ where $\pi_m$ denotes the projection onto the free coordinates within $\theta_m$.
The limit for the estimation error then follows from the fact that 
$$
\partial_{\theta_m} lcml(\theta_0) = \pi_m \bar{U_n}
$$
where consequently
$$
\Lambda_m = -(\pi_m H \pi_m')^{-1} \pi_m.
$$
The asymptotic distribution of $\mu(\hat \theta_m)$ then follows from 
applying the Delta method. The formula for the asymptotic MSE then is obvious. 
\end{proof}

Therefore the asymptotics are for all models 
driven by the same random variable $U$. Thus it follows that the estimation errors for all models 
jointly are asymptotically normal. This allows the calculation of the 
asymptotic MSE also for the averaged model using weights $w_i$: 
\begin{eqnarray} \label{eq:MSE} 
N {\mathbb E} \left(\mu(\hat \theta^{MA})
-\mu  \left[ \begin{array}{c} \tau_0 \\ \gamma_0 + \delta/\sqrt{N} \end{array} \right] \right)^2 & \to &
\sum_{i,j =1}^M (\lambda_i  \delta \delta' 
\lambda_j ' + \partial_\theta\mu \Lambda_i J 
(\partial_\theta\mu \Lambda_j)')  w_i w_j \nonumber \\
& = & \sum_{i,j =1}^M F_{ij} w_i w_j = w'F w
\end{eqnarray} 
where the next to last equation defines the matrix $F \in {\mathbb R}^{M \times M}$ and the vector of weights $w = [w_1,...,w_M]' \in {\mathbb R}^{M}$. 
The optimal weights for estimating the value $\mu(\theta_0)$ then is provided by the solution to 
\begin{equation}
\label{eq:weights}
\hat{w}_{mse} = \arg \min_{w  \in {\mathbb R}^{M}, \sum_{m=1}^M w_m = 1} w'Fw.
\end{equation}

In practice the matrix $F$ is not known but contains  only quantities that can be estimated consistently except for $\delta$. In this respect it is customary to use the estimate $\hat \delta = \sqrt{N} (\hat \gamma - \gamma_0)$ from the wide model. This estimate is unbiased but not consistent. Note that contrary to the weights from the \ac{IC}-based approach the model averaging weights that result from (\ref{eq:weights}) might be either positive or negative. Without further restrictions the optimal weight vector can be calculated explicitly. 
\\
The matrix $F$ is not necessarily nonsingular, its rank is bounded by $d+1$. Conditions for nonsingularity of $F$ can be derived for particular settings, see \cite{charkhi2016}. This also implies that the estimation of $d+1$ models is sufficient in order to obtain minimum MSE weightings. This number typically is much smaller than the number of all possible combinations of regressor selections.  
\\
The main steps to compute an optimally averaged estimator may be summarized as:
\begin{enumerate}
\item{Decide on a list of candidate models}
\item{Estimate the wide model and obtain $\hat \delta = \sqrt{N}(\hat \gamma - \gamma_0)$ as an estimate of $\delta$.}
\item{Estimate its sensitivity $H$ as well as its variability matrix $J$.} 
\item{Estimate the remaining models, then compute the weights using (\ref{eq:weights}) and the averaged estimate.}
\end{enumerate}

This description leaves the question of the choice of $\mu(\theta)$ open, which in this context is usually called the focus parameter. A number of options in this respect are:
\begin{itemize}
	\item $\mu(\theta) = \theta_j$: extracting one parameter of interest.
    \item $\mu(\theta)$ being a prediction such as the conditional probability of a particular choice.
    \item Alternatively the sum of squared errors corresponding to all parameters can be used. 
\end{itemize}

In summary we have introduced two model averaging strategies, the first is rather ad-hoc but has the benefit that the weights are easily computed from either \ac{CLAIC} or \ac{CLBIC}. The other strategy involves additional computations but features a strong theoretic framing (optimality w.r.t asymptotic \ac{MSE} of some focus quantity).

\section{Model averaging: Comparison by simulation}

In this section we use a rather simplistic approach to explore the performance of model averaging in that we only average over the two different covariance structures involved in the test decisions of section~\ref{chap:simu}, where the models differ in 10 parameters.  

The setup was described in the introduction to section~\ref{chap:simu} but here our focus is on the estimation accuracy of the previously discussed averaging estimators when compared to model selection. The results are based on the \ac{MAE} over all Monte Carlo samples with regard to the -- arbitrarily chosen -- third parameter in $\textbf{b}$ ($\mu(\theta) = \theta_3$). So the difference between the models is in the covariance structure while we are interested in the impact of this difference on the estimates for a linear parameter. We first present the \ac{MAE} for the case of selecting the model using either \ac{CLAIC} or \ac{CLBIC} followed by the errors for different \ac{MA} methods. Note that as discussed previously the probability to select the restricted model is rather low and gets lower for rising $\alpha$ such that this \ac{MAE} is mainly driven by the unrestricted model. 

In Table~\ref{tab:covav} we see that as expected the \ac{MAE} for all methods gets smaller for rising sample sizes because the underlying estimation is getting better. Furthermore we see that the errors of model selection, which are presented in the first two rows, are also getting smaller for higher values of $\alpha$ which is due to the fact that in those cases both \ac{IC}s always select the unrestricted model. 
\\
The next two rows present the \ac{MAE} for an averaging estimator which is based on an \ac{IC} (see (\ref{eq:AICw})). We observe that this method is performing worse than model selection for the two smaller sample sizes but has lower errors for all but the highest values of $\alpha$ for sample size 1000. However, it is clearly visible that the error is not improving for rising values of $\alpha$ even though the weighting decision should get easier. 
\\
The results for an averaged estimator where the weights are chosen to minimize the asymptotic \ac{MSE} are given in the last row. Again we observe better performance for larger samples but the results also show that this averaging estimator is superior to the \ac{IC}-based \ac{MA} estimators. We further observe that this model averaging estimator outperforms model selection for small values of $\alpha$. It is important to recapitulate the results from Table~\ref{tab:cov} which show that the restricted model is wrongly almost never selected by \ac{CLAIC}/\ac{CLBIC}. This limited simulation exercise points out that model averaging -- especially when the weights are chosen to be \ac{MSE} optimally -- might provide a solution to this problem.

\begin{table}[H]
\centering
\caption{Covariance Structure: Mean Absolute Error (MAE) of various selection and averaging methods for $\textbf{b}_3$}
\begin{tabular} 
 {l l |  *{5}{c} } 
n & $\alpha$ & 0  &  0.1 & 0.2 & 0.3 & 0.4 \\ \hline \hline 
300 & CLAIC Select.& 0.201 &0.203 &0.193 &0.177 &0.160  \\ 
& CLBIC Select. & 0.201 &0.205 &0.193 &0.177 &0.160  \\ \hline 
 & CLAIC Aver.& 0.225 &0.231 &0.234 &0.222 &0.221   \\ 
 & CLBIC Aver. & 0.225 &0.230 &0.234 &0.222 &0.221  \\ 
 & oMSE Aver.& 0.226 &0.219 &0.214 &0.196 &0.184 \\ \hline \hline 
500 & CLAIC Select.& 0.153 &0.151 &0.151 &0.143 &0.128  \\ 
& CLBIC Select. & 0.153 &0.151 &0.151 &0.143 &0.128 \\ \hline 
 & CLAIC Aver.& 0.158 &0.152 &0.156 &0.159 &0.154  \\ 
 & CLBIC Aver. & 0.158 &0.152 &0.156 &0.159 &0.154  \\ 
 & oMSE Aver.& 0.157 &0.149 &0.150 &0.147 &0.136  \\ \hline \hline 
1000 & CLAIC Select.& 0.121 &0.122 &0.120 &0.116 &0.102  \\ 
& CLBIC Select. & 0.121 &0.122 &0.119 &0.116 &0.102  \\ \hline 
 & CLAIC Aver.& 0.117 &0.110 &0.113 &0.116 &0.116   \\ 
 & CLBIC Aver. & 0.117 &0.111 &0.113 &0.116 &0.116  \\ 
 & oMSE Aver.& 0.116 &0.112 &0.113 &0.113 &0.106  \\ \hline 
\hline 
 \end{tabular} 

\label{tab:covav}
\end{table}

\section{Model Averaging: Empirical Example on Mobility Motifs}

\label{case}
In this section we illustrate the use of the most promising of the previously discussed model averaging methods (asymptotically \ac{MSE} optimal weights) using data from the \ac{MOP}.  
The \ac{MOP} is a panel mobility survey with a rotating sample, keeping responding 
households in the sample for three consecutive years. Each year households are 
asked to record trips for all members of the household for one randomly assigned 
week. The corresponding trip diary conforms to the KONTIV design and 
collects trip start and end time and location, transport means, distances covered and trip purposes (for more details on the data generation and characteristics  see \citep{zumkeller2009}). 
For this paper we use data from 2013. 

Based on the trip diary for each person we compile mobility motifs 
\citep{schneider2013}: Here a motif is a graph containing nodes and directed 
edges, compare Figure~\ref{fig:mot}. The nodes represent locations people visited, the edges movements between 
edges. The significance of the motifs is seen in the fact that in different data 
sets in different cities it has been observed that from the large number of 
possible motifs with up to six nodes only relatively few (\citep{schneider2013} list 17 motifs covering more than 90\% of all day observations) turn 
out to occur frequently with also the frequency of occurrence being remarkably 
similar across studies. 

Beside these empirical facts motifs are of interest also for simulation models of 
mobility behaviour. Often such models only cover a single trip, while clearly trip chains 
exist and hence choices for one trip depend on the other trips within a day. 
Most often such dependencies are not or only partly taken into account (see \citep[p. 219ff]{cascetta2009}). In this respect it might be postulated that people in a first step choose one of the possible motifs in order to decide on the various locations to be visited and also the sequence of visits to these locations. And only in a second step the actual trips are planned in detail. Such activity plans also lie at the heart of 
some mobility simulation models such as the ones implemented in MATSim \citep{Axhausen2016}. 

For this case study, we  limit our analysis to the workdays of the year 2013. The motifs were computed from the trip diaries of 2369 participants and in this data set the 15 most common motifs (as depicted in Figure~\ref{fig:mot}) account for 92.6 percent of all choices and, therefore, we summarized the remaining 7.4 percent as 'other'.  

In this case study we investigate the choice of the actual motif on a workday based on underlying sociodemographic characteristics of the persons. 
\cite{cascetta2009} argues that it is mainly the occupational status of an individual that influences the individual trip-chaining. 
We will explore this hypothesis in this case study. To do so we consider a limited set of exploratory variables: a dummy which indicates when a person is not full-time employment, 
a gender dummy as well as dummies for four age groups (10-17, 18-25, 26-60, 61 and older).\footnote{Note that in Germany the legal age to acquire a (full) drivers license is 18. Furthermore, the compulsory school attendance ends around the age of 18 in most of the federal states. According to official figures the mean age of retirement in Germany was 61.9 in 2016 (see \citep{DRV2016}).} Those variables, which are included in the model as fixed effects, are summarizes in Table~\ref{tab:des}. 

\begin{table}[t]
\centering
\caption{Descriptive Statistics for the 2013 travel diary data set}
\begin{tabular} 
 {l  l l}
 Variable & Value & Frequency \\ \hline \hline
 occupation & full-time employed & 819  \\
 & not fulltime employed & 1550 \\ \hline 
  gender & male & 1181  \\
 & female & 1188 \\ \hline 
  age & 10-17 & 173  \\
 &18-25 & 116 \\
 &25-61 & 1247 \\
 & 61+ & 833 \\ \hline 
 N & & 2369 \\ \hline \hline
 \end{tabular} 

\label{tab:des}
\end{table}

\begin{landscape}
\begin{figure}%
{\caption{The 15 most common motifs in the MOP data set. Note that the relative frequencies are w.r.t those 15 and not all occurring 824 motifs.}\label{fig:mot}}%
\begin{tabular}{@{}r@{}} 
\includegraphics[width=1.5\textwidth]{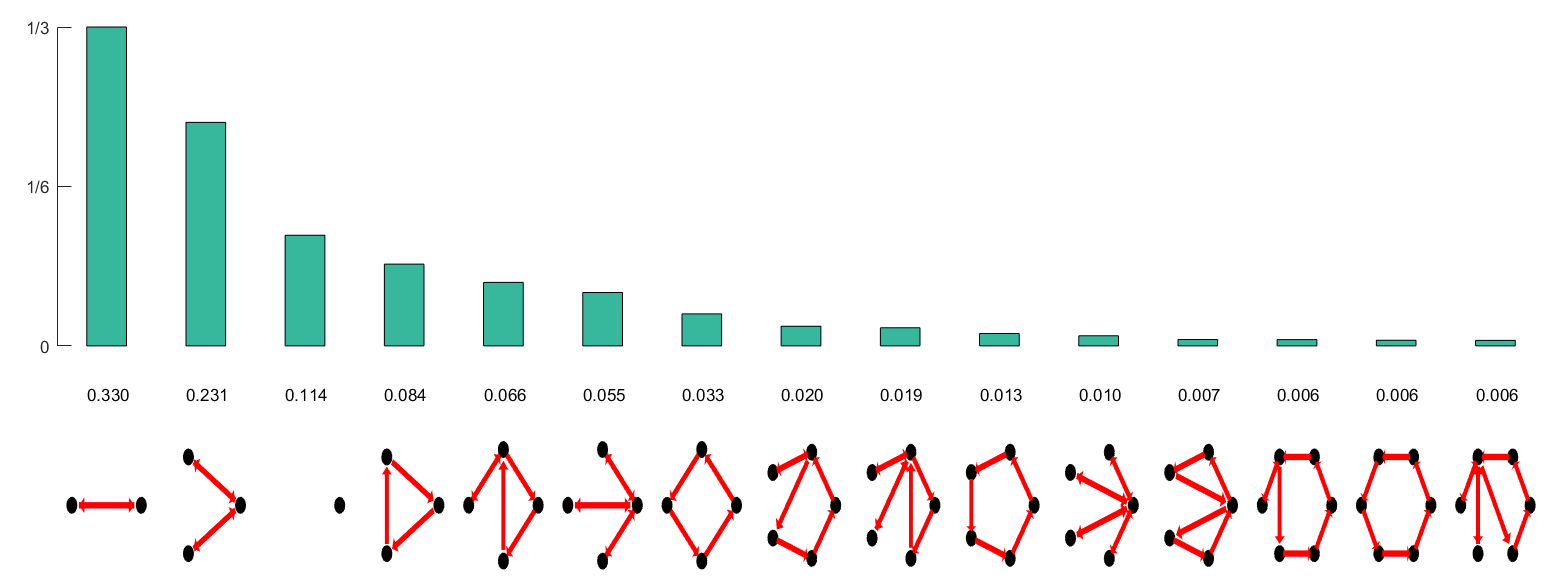}\\
\end{tabular}
\end{figure}
\end{landscape}

We explore the effect those variables have separately for each potential choice. As we observe five repeated choices for each participant we model the \ac{ASC} as random effects and the correlation between those random effects allows us to explore the substitution pattern between different motifs. In order to ensure identification the coefficients for the "other"-choice are fixed to one as is the corresponding variance. That leaves the model with 90 linear and  -- because the covariance parameters are estimated using the corresponding Cholesky decomposition ($\Omega = LL'$) -- with up to 120 covariance parameters.

Following through the steps outlined in section~\ref{chap:MA} we will start by specifying the list of candidate models. For this modeling exercise the specification of the covariance is of major concern because prior research suggests that mobility patterns are pretty stable during the work week and that, therefore, some substitution patterns might not be relevant (see \citep{schneider2013}). Furthermore, \cite{schneider2013} discuss that substitution in general can be explained by a rule-based system where several substitutions do not occur at all.  Instead of selecting a covariance structure, which could be done using the methods we discussed previously, we estimate several possible specifications ranging from the diagonal RE-specification without any correlation to the model featuring an unstructured covariance matrix.\footnote{Note that in order to ensure identification of the unstructured covariance we constrained all correlations between the 'other'-option and the remaining motif to zero.} In detail we consider four models,

\begin{itemize}
\item \textit{Unrestricted} correlations between the \ac{ASC}s of the 15 motifs
\item \textit{Block1}, substitution of the fist seven (simple) motifs to the other (more complex) motifs but no correlation within those blocks.
\item \textit{Block2}, like Block1 but also correlation within the first block,
\item \textit{Diagonal RE}, only uncorrelated random effects for each motif,
\end{itemize}
All those models include the 90 linear parameters and at least the 15 variances of the random ASCs. Using the terminology of section~\ref{chap:MA} those parameters form $\tau$ and the last model is the wide model which includes all 210 parameters (see Table~\ref{tab:mot}). The estimation for all four models is set up similar to the procedure outlined in section~\ref{chap:simu} but we initialize the optimizer at random because the true values are obviously unknown. 

In Table~\ref{tab:mot} we have summarized the different  \ac{IC} values for the models. Comparison of \ac{CLAIC} and \ac{CLBIC} reveals $Block2$ as the preferred specification. However, we can also observe that selection by \ac{CLAIC} and \ac{CLBIC} would lead to different choices for the second best model. While \ac{CLAIC} hints towards the unrestricted model, we would select the $Diagonal$ model would we rely on the \ac{CLBIC}. Furthermore, we show the weights for \ac{MSE} optimal model averaging. We focused the \ac{MSE} on the 15 coefficients which indicate whether an individual is not full-time employed. The weights might appear strange on first inspection because the model with the smallest \ac{CLBIC}/\ac{CLAIC} gets a negative weight but from Table~\ref{tab:mot_coeff} it is clear that those weights lead to meaningful averaged coefficients. The averaged coefficients reside either in between the estimate for $Block2$ and the full model or $Block2$ and $Block1$, this would have been impossible by just selecting one model. 

\begin{table}[t]
\centering
\caption{Information criteria and asymptotically \ac{MSE}-optimal weights for the fitted models}
\begin{tabular} 
 {l    *{4}{c} } 
Model & no. of parameters  & CLAIC & CLBIC &weights  \\  \hline \hline
$Diagonal$ &        105  &     193298 &     195941 & 0.7580 \\ 
$Block1$ &        161  &     193045 &     196258& -0.0328 \\ 
$Block2$ &        182  & \textbf{    192235} &     \textbf{195673} & -0.2320 \\ 
$Unrestricted$ &        210  &     192485 &     196189& 0.5068 \\ 
\hline \hline
 \end{tabular} 

\label{tab:mot}
\end{table}

Even though the main focus of this section is to illustrate the feasibility of model averaging for real data we will interpret the estimates but without going into too much detail. In order to facilitate interpretation we have added a plot which contains the motifs alongside the estimated averaged coefficients. We see that the estimates are plausible as individuals without full-time employment seem to (I) stay at home more often and (II) favor motifs with a hub-structure, returning home in between visits to different locations, over round-trips. An interesting results is that the two motifs with the highest coefficients represent distinct activity patterns, the stay-at-home motif (blue box) and the hub-motif with four non-home locations (green box). The model would most likely benefit from considering interactions of the non-full-time employment dummy with other variables, which might explain the reason for the occupational status, to further explore this paradox.

Another possibility is to use model averaging to recover information about the correlations between the ASCs. In this case we focus on the \ac{MSE} with respect to all linear parameters but for the case of this empirical example we will only look at the estimated correlation matrices. First, note that by shifting the focus we obtain different weights,
\begin{equation*}
\hat{w} = [1.5546, -1.8555, 1.2105, 0.0904],
\end{equation*}
where the first weight is for the $Diagonal$ model, followed by the $Block1$, $Block2$ and $Unresticted$ specification. We see that the $Unrestricted$ model has the smallest weight which might be interpreted as a penalty for the high standard errors of the estimate. When compared directly we see that the estimates of the correlation matrix of the random effects for the unrestricted model (Figure~\ref{fig:corr_full}) and the averaged estimate (Figure~\ref{fig:corr_av}) reveal similar but not identical patterns. The ordering of the motifs is identical to that depicted in Figure~\ref{fig:mot}, therefore, motifs with lower numbers are observed more often. The first insight from those heatmaps is that the third motif is special in that its correlation with all other motifs is low in comparison to the other common motifs. This is plausible because it is the stay-at-home motif and is observed for the estimates from the averaged as well as from the $Unrestricted$ model.

\begin{landscape}
{\tiny
\begin{table}[t]
\centering
\caption{Estimated and averaged coefficients for the no-full-time employment dummy}
\begin{tabular} 
 {l    *{15}{c} } 
Model & Mot1 & Mot2 & Mot3 & Mot4 & Mot5 & Mot6 & Mot7 & Mot8 & Mot9 & Mot10 & Mot11 & Mot12 & Mot13 & Mot14 & Mot15   \\  \hline \hline
$Unrestricted$ &  1.165 &  1.205&  1.612&  0.994&  1.050&  1.375&  0.934&  1.050&  1.353&  0.873&  1.709&  1.136&  1.041&  0.865&  1.224 \\ 
$Block2$ &  1.157 &  1.199&  1.587&  0.985&  1.036&  1.368&  0.928&  1.013&  1.311&  0.880&  1.723&  1.173&  1.012&  0.852&  1.224  \\ 
$Block1$ &  1.140 &  1.164&  1.597&  0.971&  1.025&  1.323&  0.948&  1.009&  1.339&  0.864&  1.695&  1.123&  0.945&  0.806&  1.203  \\ 
$Diagonal$ &  1.135 &  1.178&  1.585&  0.984&  1.031&  1.329&  0.945&  0.997&  1.298&  0.886&  1.692&  1.166&  1.007&  0.831&  1.216 \\ \hline  
$Averaged$ &  1.145 &  1.187&  1.597&  0.989&  1.040&  1.344&  0.943&  1.020&  1.321&  0.881&  1.693&  1.150&  1.025&  0.844&  1.219  \\ 
\hline \hline
 \end{tabular} 

\label{tab:mot_coeff}
\end{table}
}

\begin{figure}%
\label{pl:motif2}
{\caption{Averaged coefficients for the no-full-time employment dummy}\label{fig:mot2}}%
\begin{tabular}{@{}r@{}} 
\includegraphics[width=1.65\textwidth]{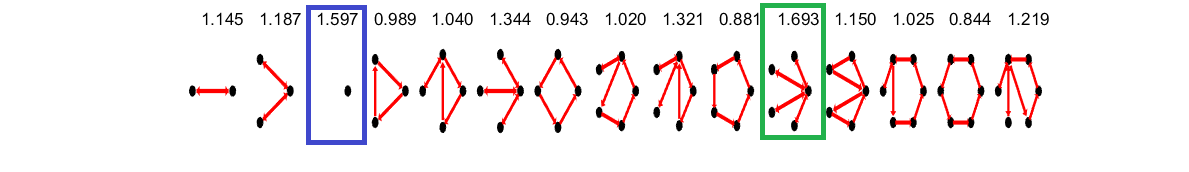}\\
\end{tabular}
\end{figure}
\end{landscape}

Another interesting finding is that there is a negative correlation between motif 7 (cyclic tour with 4 locations) and motif 9 (tour plus two separate trips) which is in line with the theory in \cite{schneider2013}. There is a difference between the estimates as the averaged estimate suggests that there also is a negative correlation between motif 7 and motif 10. Both estimates also reveal that there are in general low correlation between the blocks of the more common (motif 1-7) and the less common motifs. Finally it is worth noting that the number of observations for the motifs with a number larger than 10 are pretty small (see Figure~\ref{fig:mot}) which makes the estimate unreliable regardless of the method used for estimation.

This section showed that it is straightforward to apply the model averaging method developed in the previous sections to real world data. Note that we will present more detailed findings of a future analysis, which will involve more than one year of MOP data, elsewhere and that some computation times related to this empirical example are presented in the appendix.
\clearpage

\begin{figure}[t]%
{\caption{Correlation matrix of the ASCs for the $Unresticted$ model.}
\label{fig:corr_full}}%
\begin{tabular}{@{}r@{}} 
\includegraphics[width=\textwidth]{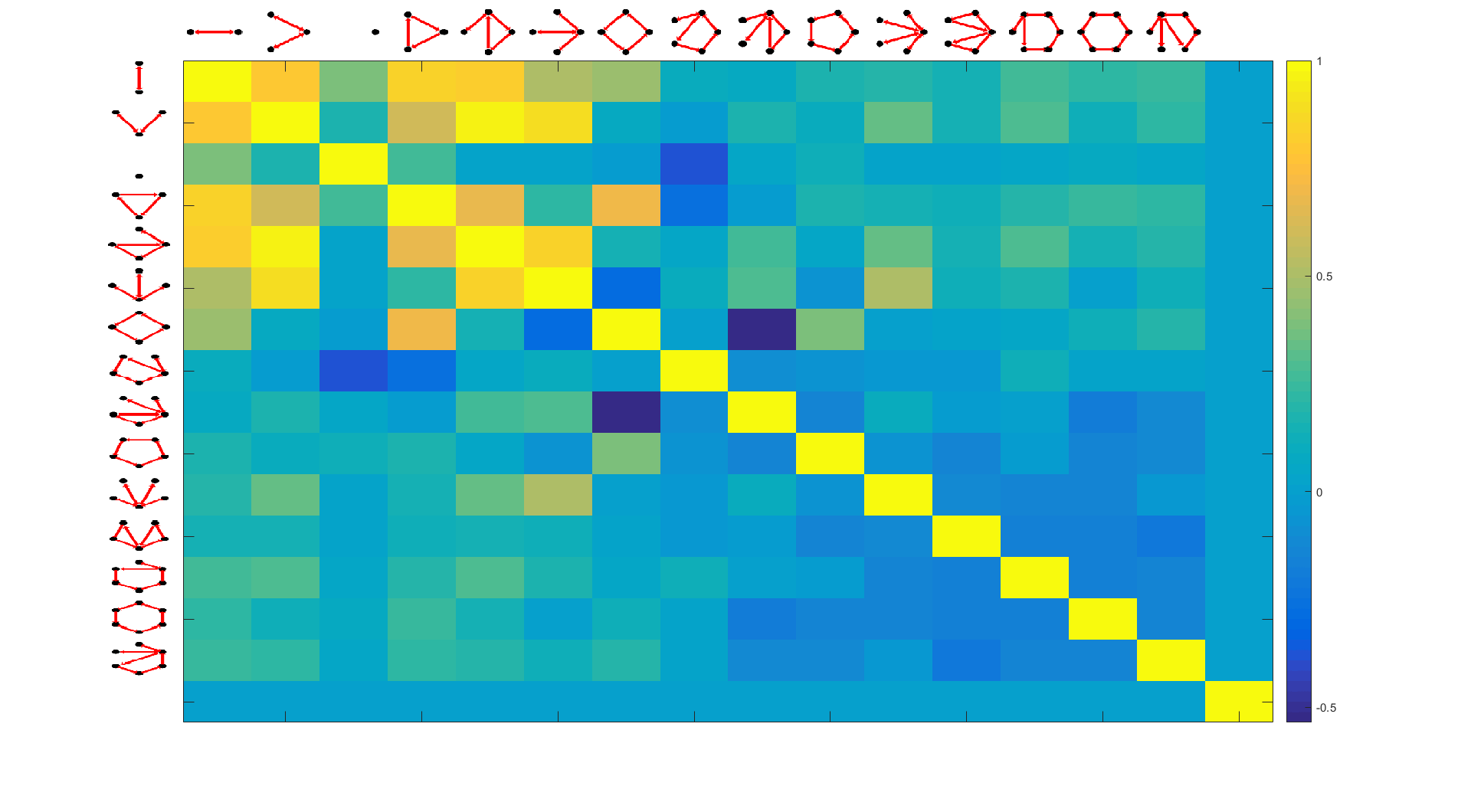}\\
\end{tabular}
\end{figure}
\begin{figure}[t]%
{\caption{Correlation matrix of the ASCs for the averaged model.}
\label{fig:corr_av}}%
\begin{tabular}{@{}r@{}} 
\includegraphics[width=\textwidth]{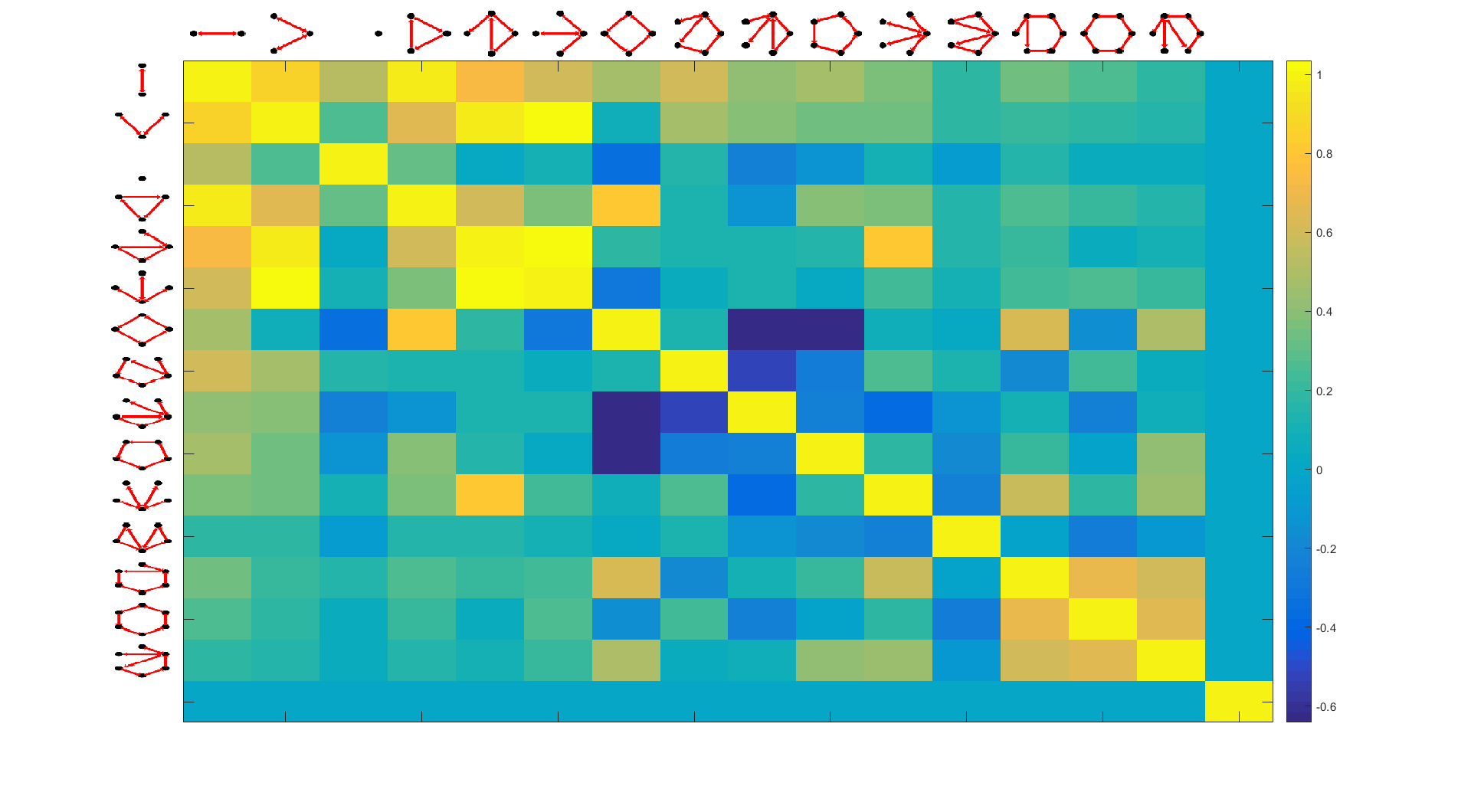}\\
\end{tabular}
\end{figure}

\clearpage

\section{Discussion and Conclusion}

In this paper we  investigated the performance of various model selection and model averaging methods. Initially we discussed that naive tests based on the ratio of two composite likelihoods are inapplicable and presented two alternative classes of tests which address this problem. First, there are correction based tests whose applicability to \ac{MACML} models was proposed in previous papers. Second, we proposed the use of an empirical likelihood test which to the best of our knowledge has not been used previously in the \ac{CML} context and for that matter in the \ac{MACML} literature. 

We assessed the performance (size and power) of those tests using a limited simulation exercise. A main result from section~\ref{chap:simu} is that the size of the tests ($cCLR_1 - cCLR_3$) that are based on a correction for $CLR$ are still -- at times severely -- inflated. Those results are in line with findings in the \ac{CML}-literature (see \citep{chandler2007} for a comparison of the naive and a corrected $CLR$, as well as \citep{geys1999}). Results in \cite{geys1999} hint to the problem that the size inflation gets more pronounced as the sample size rises but further analysis is needed to check whether this problem is specific to \ac{MACML}. 
\\
From the three tests that rely on correcting the $CLR$, the tests performances were similar for large sample sizes (1000) but the test based on moment matching (esp. $cCLR_2$) showed less severe size inflation when compared to the test of \cite{pace2011} ($cCLR_3$).  However, the results of our simulation suggest that the correction based tests in general are outperformed by the $EL$ test. The empirical size of this test was almost always closer to the nominal level when compared to the correction based test. At the same time the power of this test was equal or larger compared to $cCLR_1$ to $cCLR_3$ for most of the settings. As expected the $EL$ test shows inferior performance with regard to the power in small samples (but only for the covariance setting). In summary we think that it is safe to conclude that the $EL$ test is a viable choice for samples of size 1000 and upwards and very likely to outperformed correction-based alternatives in those settings.

In order to offer a more detail perspective on the performance of the various composite-likelihood-ratio-tests, it might be beneficial to reassess the performance of the tests using size-corrected-power (see for example \citep{davidson1984}). Furthermore, the problems which we observed for the correction based test warrant further investigation. It might be of special interest whether the problems are influenced by the choice of the analytic approximation to the \ac{MVNCDF} as there are potential alternatives to the \ac{SJ} approximation (see \cite{connors2014} or \cite{batram2016}). Some practical questions regarding the $EL$ test especially the sensitivity to violation of (\ref{eq:psi}) remain open.

We also discussed model selection strategy based on information criteria. Both of the popular \ac{IC}s have been adapted to the \ac{CML} framework in the form of the \ac{CLAIC} and \ac{CLBIC}. The main results of the simulations is that those criteria tend to select the largest model extremely often. Even though this problem was slightly less pronounced for the \ac{CLBIC} both criteria seem to be unsuited to discriminate between models especially with regard to differing covariance structures. The main problem here seems to be that the penalty terms, 
which are computed as the trace of (parts of) the estimated sandwich information matrix, is subject to estimation uncertainty and in general not large enough. That the penalty might not be large enough when it is estimated from the same data which also informed the likelihood, which is the first part of every \ac{IC} and in favor of large models by definition, is known in other parts of the model selection literature (see \cite{greven2010}).

Finally we introduce the idea of model averaging and show that the theoretic results of \cite{hjort2003} hold for \ac{CML} estimation. We also discussed the simpler \ac{IC}-based model averaging method. In a small simulation exercise we showed that the \ac{MA} estimators have the potential to outperform \ac{IC}-based model selection when concerned with estimation error. Especially for large sample sizes the asymptotically optimal model averaging showed promising results. However, \ac{MA} is mainly leveraged in a small simulation exercise and the empirical example and, therefore, more extensive simulation studies are needed in order to further the understanding of the performance of \ac{MA} when applied to \ac{MACML} models. 

Furthermore, the idea of focused model selection in the context of \ac{MNP}/\ac{MACML} models should be further explored. First, it is in principle possible to 'focus' on arbitrary quantities as long as those quantities depend on the model parameters. A natural example for \ac{MACML} models are the probabilities of different alternatives which are highly relevant for prediction. Second, it is also possible to derive the optimal \ac{MA} weights not only with respect to asymptotic \ac{MSE} but other metrics which might be more relevant to the research question at hand (see \cite{claeskens2006}). Those two options allow the researcher to tailor \ac{MA} estimators exactly to her research question. Finally, we have not even discussed the benefits of model averaging w.r.t to post-model-selection-inference (see \cite[199ff]{claeskens2008} and \cite{leeb2005}).

Based on  our theoretical discussion and the results of our simulations we would currently advise to use the empirical likelihood test to perform model selection for nested models. The inferior performance for small sample size should be a minor concern because the \ac{MACML} method is proposed for large data sets where traditional estimation methods for \ac{MNP} models like \ac{MSL} face computational problems. The \ac{CLAIC} and \ac{CLBIC} have the tendency to favor larger models especially when covariance structures are compared. Therefore, we suggest that researchers explore model averaging to combine the results from several models as shown in our empirical example.

\clearpage
\bibliography{lit}

\section*{Appendix: Computation times} 

In this appendix we provide tangible evidence regarding the computational performance of the methods discussed in this paper. All computations presented in this paper have been done on a laptop computer with an Intel i5-4215M with 2.6 Ghz (dual core) and 8 GB of RAM. All computation times are with respect to the unrestricted or the diagonal model presented in section~\ref{case} with estimation based on real world data, or the unrestricted model presented from section~\ref{chap:cov} fitted to simulated data. Given that the gradients are a by-product of the estimation we observe the timings given in the following table.
\begin{table}[H]
\centering
\caption{Computation times for various model specifications which were presented in the preceding sections}
\begin{tabular} 
 {l    *{3}{c} } 
 & $Unrestr.$ - real world  & $Diagonal$ - real world  & $Unrestr.$ - simulated \\ \hline \hline
no. individuals & 2369 & 2369 & 1000\\
no. of decision & 5 & 5 & 5\\
no of alterna. & 16 & 16 & 5\\
no of parameters & 210 & 105 & 20 \\
init. at truth & no & no & yes\\
\hline \hline
estimation ($\hat{\theta}$) & 11.2 hours & 1.1 hours & 10 seconds \\
$\hat{H}$ & 6.4 hours & 34 minutes & 15 seconds \\
$\hat{H_1}$ & 4.5 minutes & 2.3 minutes & 8 seconds \\
$F$/$\hat{w}$ (given $\hat{H}$/$\hat{H_1}$) & 1 second & 1 second & 1 second\\
$EL$ test & 80 seconds & 80 seconds & 2.5 seconds\\
\hline \hline
 \end{tabular} 

\end{table}
Note that there are technical differences between implementation of the likelihood function used for the optimization and that which is used to compute $\hat{H_1}$. Those computation times are related to the different model selection and model averaging methods as follows,

\begin{itemize}
\item Computation of either $\hat{H_1}$ (one likelihood evaluation) or $\hat{H}$ (numerical Hessian) is all that is needed for $cCLR_1$, $cCLR_2$, $cCLR_3$, $CLAIC$ and $CLBIC$.
\item The computation time for the $EL$ test is  due to the need to find $\psi$ for both models involved in the test.
\item Computing $F$ and solve for the four weights $w$ is very fast but there is also the need to compute $\hat{H_1}$ for the wide model, which needs to be added to the overall computation time for model averaging.
\end{itemize}

In summary and given that the computer system used for this task is an average business laptop and far from a high-performance system, the \ac{MACML} approach and the presented model selection and model averaging methods are highly appealing for practical use.

\end{document}